\newtheorem{lemma}{Lemma}
\title{
Relaxation heuristics for the set multicover problem\\ with generalized upper bound constraints\thanks{A preliminary version of this paper was presented in \cite{UmetaniS2013}.}}
\author{
Shunji Umetani\thanks{Osaka University, Suita, Osaka, 565-0871, Japan. \texttt{umetani@ist.osaka-u.ac.jp}}, Masanao Arakawa\thanks{Fujitsu Limited, Kawasaki 211-8588, Japan. \texttt{arakawa.masanao@jp.fujitsu.com}}, Mutsunori Yagiura\thanks{Nagoya University, Nagoya 464-8601, Japan. \texttt{yagiura@nagoya-u.jp}}
}
\begin{document}
\maketitle

\begin{abstract}
We consider an extension of the set covering problem (SCP) introducing (i)~multicover and (ii)~generalized upper bound (GUB)~constraints.
For the conventional SCP, the pricing method has been introduced to reduce the size of instances, and several efficient heuristic algorithms based on such reduction techniques have been developed to solve large-scale instances.
However, GUB constraints often make the pricing method less effective, because they often prevent solutions from containing highly evaluated variables together.
To overcome this problem, we develop heuristic algorithms to reduce the size of instances, in which new evaluation schemes of variables are introduced taking account of GUB constraints.
We also develop an efficient implementation of a 2-flip neighborhood local search algorithm that reduces the number of candidates in the neighborhood without sacrificing the solution quality.
In order to guide the search to visit a wide variety of good solutions, we also introduce a path relinking method that generates new solutions by combining two or more solutions obtained so far.
According to computational comparison on benchmark instances, the proposed method succeeds in selecting a small number of promising variables properly and performs quite effectively even for large-scale instances having hard GUB constraints.

\bigskip

\noindent\textbf{keyword:} combinatorial optimization, set covering problem, metaheuristics, local search, Lagrangian relaxation
\end{abstract}

\section{Introduction\label{sec:intro}}
The set covering problem (SCP) is one of representative combinatorial optimization problems.
We are given a set of $m$ elements $i \in M = \{ 1, \dots, m \}$, $n$ subsets $S_j \subseteq M$ ($|S_j| \ge 1$) and their costs $c_j$ ($> 0$) for $j \in N = \{ 1, \dots, n \}$.
We say that $X \subseteq N$ is a cover of $M$ if $\bigcup_{j \in X} S_j = M$ holds.
The goal of SCP is to find a minimum cost cover $X$ of $M$.
The SCP is formulated as a 0-1 integer programming (0-1 IP) problem as follows:
\begin{equation}
\label{eq:scp}
\begin{array}{lll}
\textnormal{minimize} & \displaystyle\sum_{j \in N} c_j x_j & \\
\textnormal{subject to} & \displaystyle\sum_{j \in N} a_{ij} x_j
  \ge 1, & i \in M,\\
& x_j \in \{ 0, 1 \}, & j \in N,
\end{array}
\end{equation}
where $a_{ij} = 1$ if $i \in S_j$ holds and $a_{ij} = 0$ otherwise, and $x_j = 1$ if $j \in X$ and $x_j = 0$ otherwise.
That is, a column $\bm{a}_j = ( a_{1j}, \dots, a_{mj} )^\top$ of matrix $(a_{ij})$ represents the corresponding subset $S_j$ by $S_j = \{ i \in M \mid a_{ij} = 1 \}$.
For notational convenience, for each $i \in M$, let $N_i = \{ j \in N \mid a_{ij} = 1 \}$ be the index set of subsets $S_j$ that contains the element $i$.

The SCP is known to be NP-hard in the strong sense, and there is no polynomial time approximation scheme (PTAS) unless P = NP.
However, the worst-case performance analysis does not necessarily reflect the experimental performance in practice.
The continuous development of mathematical programming has much improved the performance of heuristic algorithms accompanied by advances in computing machinery \cite{CapraraA2000,UmetaniS2007}.
For example, Beasley \cite{BeasleyJE1990a} presented a number of greedy algorithms based on Lagrangian relaxation called the Lagrangian heuristics, and Caprara et~al. \cite{CapraraA1999} introduced pricing techniques into a Lagrangian heuristic algorithm to reduce the size of instances.
Several efficient heuristic algorithms based on Lagrangian heuristics have been developed to solve very large-scale instances with up to 5000 constraints and 1,000,000 variables with deviation within about 1\% from the optimum in a reasonable computing time \cite{CapraraA1999,CasertaM2007,CeriaS1998,YagiuraM2006}.

The SCP has important real applications such as crew scheduling \cite{CapraraA1999}, vehicle routing \cite{HashimotoH2009}, facility location \cite{BorosE2005,FarahaniRZ2012}, and logical analysis of data \cite{BorosE2000}.
However, it is often difficult to formulate problems in real applications as SCP, because they often have additional side constraints in practice.
Most practitioners accordingly formulate them as general mixed integer programming (MIP) problems and apply general purpose solvers, which are usually less efficient compared with solvers specially tailored to SCP.

In this paper, we consider an extension of SCP introducing (i)~multicover and (ii)~generalized upper bound (GUB) constraints, which arise in many real applications of SCP such as vehicle routing \cite{BettinelliA2014,ChoiE2007}, crew scheduling \cite{KohlN2004}, staff scheduling \cite{CapraraA2003,IkegamiA2003} and logical analysis of data \cite{HammerPL2006}.
The multicover constraint is a generalization of covering constraint \cite{PessoaLS2013,VaziraniVV2001}, in which each element $i \in M$ must be covered at least $b_i \in \mathbb{Z}_+$ ($\mathbb{Z}_+$ is the set of nonnegative integers) times.
The GUB constraint is defined as follows.
We are given a partition $\{ G_1, \dots, G_k \}$ of $N$ ($\forall h \not= h^{\prime}$, $G_h \cap G_{h^{\prime}} = \emptyset$, $\bigcup_{h=1}^k G_h = N$).
For each block $G_h \subseteq N$ ($h \in K = \{ 1, \dots, k \}$), the number of selected subsets $S_j$ from the block (i.e., $j \in G_h$) is constrained to be at most $d_h$ ($\le |G_h|$).
We call the resulting problem the \emph{set multicover problem with GUB constraints} (\mbox{SMCP-GUB}), which is formulated as a 0-1 IP problem as follows:
\begin{equation}
\label{eq:smcp-gub}
\begin{array}{llll}
\textnormal{minimize} & \displaystyle z(\bm{x}) = \sum_{j \in N} c_j x_j\\
\textnormal{subject to} & \displaystyle \sum_{j \in N} a_{ij} x_j \ge b_i, & i \in M,\\
 & \displaystyle \sum_{j \in G_h} x_j \le d_h, & h \in K,\\
 & x_j \in \{ 0, 1 \}, & j \in N.\\
\end{array}
\end{equation}

This generalization of SCP substantially extends the variety of its applications.
However, GUB constraints often make the pricing method less effective, because they often prevent solutions from containing highly evaluated variables together.
To overcome this problem, we develop heuristic algorithms to reduce the size of instances, in which new evaluation schemes of variables are introduced taking account of GUB constraints.
We also develop an efficient implementation of a 2-flip neighborhood local search algorithm that reduces the number of candidates in the neighborhood without sacrificing the solution quality.
In order to guide the search to visit a wide variety of good solutions, we also introduce an evolutionary approach called the path relinking method \cite{GloverF1997} that generates new solutions by combining two or more solutions obtained so far.

The \mbox{SMCP-GUB} is NP-hard, and the (supposedly) simpler problem of judging the existence of a feasible solution is NP-complete, since the satisfiability (SAT) problem can be reduced to this decision problem.
We accordingly allow the search to visit infeasible solutions violating multicover constraints and evaluate their quality by the following penalized objective function.
Note that throughout the remainder of the paper, we do not consider solutions that violate the GUB constraints, and the search only visits solutions that satisfy the GUB constraints.
Let $\bm{w} = (w_1, \dots, w_m) \in \mathbb{R}_+^m$ ($\mathbb{R}_+$ is the set of nonnegative real values) be a penalty weight vector.
A solution $\bm{x}$ is evaluated by
\begin{equation}
\label{eq:eval}
\hat{z}(\bm{x},\bm{w}) = \sum_{j \in N} c_j x_j + \sum_{i \in M} w_i \max\left\{ b_i - \sum_{j \in N} a_{ij} x_j, 0 \right\}.
\end{equation}
If the penalty weights $w_i$ are sufficiently large (e.g., $w_i > \sum_{j \in N} c_j$ holds for all $i \in M$), then we can conclude \mbox{SMCP-GUB} to be infeasible when an optimal solution $\bm{x}^{\ast}$ under the penalized objective function $\hat{z}(\bm{x},\bm{w})$ violates at least one multicover constraint.
In our algorithm, the initial penalty weights $\bar{w}_i$ ($i \in M$) are set to $\bar{w}_i = \sum_{j \in N} c_j + 1$ for all $i \in M$.
Starting from the initial penalty weight vector $\bm{w} \leftarrow \bar{\bm{w}}$, the penalty weight vector $\bm{w}$ is adaptively controlled to guide the search to visit better solutions.

We present the outline of the proposed algorithm for \mbox{SMCP-GUB}.
The first set of initial solutions are generated by applying a randomized greedy algorithm several times.
The algorithm then solves a Lagrangian dual problem to obtain a near optimal Lagrangian multiplier vector $\tilde{\bm{u}}$ through a subgradient method (Section~\ref{sec:relax}), which is applied only once in the entire algorithm.
Then, the algorithm applies the following procedures in this order: (i)~heuristic algorithms to reduce the size of instances (Section~\ref{sec:reduction}), (ii)~a 2-flip neighborhood local search algorithm (Section~\ref{sec:local-search}), (iii)~an adaptive control of penalty weights (Section~\ref{sec:weighting}), and (iv)~a path relinking method to generate initial solutions (Section~\ref{sec:path-relink}).
These procedures are iteratively applied until a given time limit has run out.

\section{Lagrangian relaxation and subgradient method\label{sec:relax}}
For a given vector $\bm{u} = (u_1, \dots, u_m) \in \mathbb{R}_+^m$, called a Lagrangian multiplier vector, we consider the following Lagrangian relaxation problem $\textnormal{LR}(\bm{u})$ of \mbox{SMCP-GUB}:
\begin{equation}
\label{eq:relax}
\begin{array}{lll}
\textnormal{minimize} & \multicolumn{2}{l}{z_{\scalebox{0.5}{\textnormal{LR}}}(\bm{u}) = \displaystyle\sum_{j \in N} c_j x_j + \sum_{i \in M} u_i \left( b_i - \sum_{j \in N} a_{ij} x_j \right)}\\
 & \multicolumn{2}{l}{= \displaystyle \sum_{j \in N} \left( c_j - \sum_{i \in M} a_{ij} u_i \right) x_j + \sum_{i \in M} b_i u_i}\\
\textnormal{subject to} & \displaystyle\sum_{j \in G_h} x_j \le d_h, & h \in K,\\
 & x_j \in \{ 0, 1 \}, & j \in N.\\
\end{array}
\end{equation}
We refer to $\tilde{c}_j(\bm{u}) = c_j - \sum_{i \in M} a_{ij} u_i$ as the Lagrangian cost associated with column $j \in N$.
For any $\bm{u} \in \mathbb{R}_+^m$, $z_{\scalebox{0.5}{\textnormal{LR}}}(\bm{u})$ gives a lower bound on the optimal value $z(\bm{x}^{\ast})$ of \mbox{SMCP-GUB} (when it is feasible, i.e., there exists a feasible solution to \mbox{SMCP-GUB}).

The problem of finding a Lagrangian multiplier vector $\bm{u}$ that maximizes $z_{\scalebox{0.5}{\textnormal{LR}}}(\bm{u})$ is called the Lagrangian dual problem ($\textnormal{LRD}$):
\begin{equation}
\label{eq:dual}
\textnormal{maximize} \left\{ z_{\scalebox{0.5}{\textnormal{LR}}}(\bm{u}) \mid \bm{u} \in \mathbb{R}_+^m \right\}.
\end{equation}
For a given $\bm{u} \in \mathbb{R}_+^m$, we can easily compute an optimal solution $\tilde{x}(\bm{u}) = (\tilde{x}_1(\bm{u}), \dots, \tilde{x}_n(\bm{u}))$ to $\textnormal{LR}(\bm{u})$ as follows.
For each block $G_h$ ($h \in K$), if the number of columns $j \in G_h$ satisfying $\tilde{c}_j(\bm{u}) < 0$ is equal to $d_h$ or less, then set $\tilde{x}_j(\bm{u}) \leftarrow 1$ for variables satisfying $\tilde{c}_j(\bm{u}) < 0$ and $\tilde{x}_j(\bm{u}) \leftarrow 0$ for the other variables; otherwise, set $\tilde{x}_j(\bm{u}) \leftarrow 1$ for variables with the $d_h$ lowest Lagrangian costs $\tilde{c}_j(\bm{u})$ and $\tilde{x}_j(\bm{u}) \leftarrow 0$ for the other variables.

The Lagrangian relaxation problem $\textnormal{LR}(\bm{u})$ has integrality property.
That is, an optimal solution to $\textnormal{LR}(\bm{u})$ is also optimal to its linear programming (LP) relaxation problem obtained by replacing $x_j \in \{ 0, 1 \}$ in (\ref{eq:relax}) with $0 \le x_j \le 1$ for all $j \in N$.
In this case, any optimal solution $\bm{u}^{\ast}$ to the dual of the LP relaxation problem of \mbox{SMCP-GUB} is also optimal to LRD, and the optimal value $z_{\scalebox{0.5}{\textnormal{LP}}}$ of the LP relaxation problem of \mbox{SMCP-GUB} is equal to $z_{\scalebox{0.5}{\textnormal{LR}}}(\bm{u}^{\ast})$.

A common approach to compute a near optimal Lagrangian multiplier vector $\tilde{\bm{u}}$ is the subgradient method.
It uses the subgradient vector $\bm{g}(\bm{u}) = (g_1(\bm{u}), \dots, \allowbreak g_m(\bm{u})) \in \mathbb{R}^m$, associated with a given $\bm{u} \in \mathbb{R}_+^m$, defined by
\begin{equation}
\label{eq:subgrad}
g_i(\bm{u}) = b_i - \sum_{j \in N} a_{ij} \tilde{x}_j(\bm{u}).
\end{equation}
This method generates a sequence of nonnegative Lagrangian multiplier vectors $\bm{u}^{(0)}, \bm{u}^{(1)}, \dots$, where $\bm{u}^{(0)}$ is a given initial vector and $\bm{u}^{(l+1)}$ is updated from $\bm{u}^{(l)}$ by the following formula:
\begin{equation}
\label{eq:update-subgrad}
u_i^{(l+1)} \leftarrow \max \left\{ u_i^{(l)} + \lambda \frac{\hat{z}(\bm{x}^{\ast},\bar{\bm{w}}) - z_{\scalebox{0.5}{\textnormal{LR}}}(\bm{u}^{(l)})}{|| \bm{g}(\bm{u}^{(l)}) ||^2} g_i(\bm{u}^{(l)}), 0 \right\}, \quad i \in M,
\end{equation}
where $\bm{x}^{\ast}$ is the best solution obtained so far under the penalized objective function $\hat{z}(\bm{x},\bar{\bm{w}})$ with the initial penalty weight vector $\bar{\bm{w}}$, and $\lambda > 0$ is a parameter called the step size.

When huge instances of SCP are solved, the computing time spent on the subgradient method becomes very large if a naive implementation is used.
Caprara et~al. \cite{CapraraA1999} developed a variant of the pricing method on the subgradient method.
They define a core problem consisting of a small subset of columns $C \subset N$ ($|C| \ll |N|$), chosen among those having low Lagrangian costs $\tilde{c}_j(\bm{u}^{(l)})$ ($j \in N$).
Their algorithm iteratively updates the core problem in a similar fashion that is used for solving large-scale LP problems \cite{BixbyRE1992}.
In order to solve huge instances of \mbox{SMCP-GUB}, we also introduce a pricing method into the basic subgradient method (BSM) described, e.g., in \cite{UmetaniS2007}.

\section{2-flip neighborhood local search\label{sec:local-search}}
The local search (LS) starts from an initial solution $\bm{x}$ and repeats replacing $\bm{x}$ with a better solution $\bm{x}^{\prime}$ in its neighborhood $\mathrm{NB}(\bm{x})$ until no better solution is found in $\mathrm{NB}(\bm{x})$.
For a positive integer $r$, the $r$-flip neighborhood $\mathrm{NB}_r(\bm{x})$ is defined by $\mathrm{NB}_r(\bm{x}) = \{ \bm{x}^{\prime} \in \{ 0,1 \}^n \mid d(\bm{x},\bm{x}^{\prime}) \le r \}$, where $d(\bm{x},\bm{x}^{\prime}) = |\{ j \in N \mid x_j \not= x_j^{\prime} \}|$ is the Hamming distance between $\bm{x}$ and $\bm{x}^{\prime}$.
In other words, $\mathrm{NB}_r(\bm{x})$ is the set of solutions obtainable from $\bm{x}$ by flipping at most $r$ variables.
We first develop a 2-flip neighborhood local search algorithm (2-FNLS) as a basic component of the proposed algorithm.
In order to improve efficiency, 2-FNLS searches $\mathrm{NB}_1(\bm{x})$ first, and then $\mathrm{NB}_2(\bm{x}) \setminus \mathrm{NB}_1(\bm{x})$.

We first describe the algorithm to search $\mathrm{NB}_1(\bm{x})$, called the 1-flip neighborhood search.
Let 
\begin{equation}
\begin{array}{lcl}
\Delta \hat{z}_j^{\uparrow}(\bm{x},\bm{w}) & = & c_j - \displaystyle\sum_{i \in M_L(\bm{x}) \cap S_j} w_i,\\
\Delta \hat{z}_j^{\downarrow}(\bm{x},\bm{w}) & = & - c_j + \displaystyle\sum_{i \in (M_L(\bm{x}) \cup M_E(\bm{x})) \cap S_j} w_i,
\end{array}
\end{equation}
denote the increase of $\hat{z}(\bm{x},\bm{w})$ by flipping $x_j = 0 \to 1$ and $x_j = 1 \to 0$, respectively, where $M_L(\bm{x}) = \{ i \in M \mid \sum_{j \in N} a_{ij} x_j < b_i \}$ and $M_E(\bm{x}) = \{ i \in M \mid \sum_{j \in N} a_{ij} x_j = b_i \}$.
The algorithm first searches for an improved solution obtainable by flipping $x_j = 0 \to 1$ by searching for $j \in N \setminus X(\bm{x})$ satisfying $\Delta \hat{z}_j^{\uparrow}(\bm{x},\bm{w}) < 0$ and $\sum_{j^{\prime} \in G_h} x_{j^{\prime}} < d_h$ for the block $G_h$ containing $j$, where $X(\bm{x}) = \{ j \in N \mid x_j = 1 \}$.
If an improved solution exists, it chooses $j$ with the minimum $\Delta \hat{z}_j^{\uparrow}(\bm{x},\bm{w})$; otherwise, it searches for an improved solution obtainable by flipping $x_j = 1 \to 0$ by searching for $j \in X(\bm{x})$ satisfying $\Delta \hat{z}_j^{\downarrow}(\bm{x},\bm{w}) < 0$.

We next describe the algorithm to search $\mathrm{NB}_2(\bm{x}) \setminus \mathrm{NB}_1(\bm{x})$, called 2-flip neighborhood search.
Yagiura el~al. \cite{YagiuraM2006} developed a 3-flip neighborhood local search algorithm for SCP.
They derived conditions that reduce the number of candidates in $\mathrm{NB}_2(\bm{x}) \setminus \mathrm{NB}_1(\bm{x})$ and $\mathrm{NB}_3(\bm{x}) \setminus \mathrm{NB}_2(\bm{x})$ without sacrificing the solution quality.
However, those conditions are not directly applicable to the 2-flip neighborhood search for \mbox{SMCP-GUB} because of GUB constraints.
Below we derive the following three lemmas that reduce the number of candidates in $\mathrm{NB}_2(\bm{x}) \setminus \mathrm{NB}_1(\bm{x})$ by taking account of GUB constraints.

Let $\Delta \hat{z}_{j_1,j_2}(\bm{x},\bm{w})$ denote the increase of $\hat{z}(\bm{x},\bm{w})$ by flipping the values of $x_{j_1}$ and $x_{j_2}$ simultaneously.
\begin{lemma}
\label{lem:nb1}
Suppose that a solution $\bm{x}$ is locally optimal with respect to $\mathrm{NB}_1(\bm{x})$.
Then $\Delta \hat{z}_{j_1,j_2}(\bm{x},\bm{w}) < 0$ holds, only if $x_{j_1} \not= x_{j_2}$.
\end{lemma}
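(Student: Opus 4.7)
My plan is to prove the contrapositive: assuming the simultaneous flip preserves the GUB constraints (the only 2-flips actually considered, since the paper restricts the search to GUB-feasible solutions) and $x_{j_1}=x_{j_2}$, I show that $\Delta \hat{z}_{j_1,j_2}(\bm{x},\bm{w})\ge 0$. The argument rests on two ingredients: 1-flip local optimality, together with a subadditivity of the penalized objective under simultaneous flips that follows from the convexity of the per-element penalty $f_i(t)=w_i\max\{b_i-t,0\}$ in the coverage count $t=\sum_{j\in N}a_{ij}x_j$. I split into two cases according to the common value of $x_{j_1}$ and $x_{j_2}$.

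\emph{Case $x_{j_1}=x_{j_2}=0$.} First I observe that if flipping both to $1$ preserves all GUB constraints, then flipping only one of them does too, because the relevant block sums then increase by at most what they would in the 2-flip. Hence each single flip lies in $\mathrm{NB}_1(\bm{x})$, and local optimality yields $\Delta \hat{z}_{j_k}^{\uparrow}(\bm{x},\bm{w})\ge 0$ for $k=1,2$. Next I would establish the subadditive bound
\[
\Delta \hat{z}_{j_1,j_2}(\bm{x},\bm{w}) \;\ge\; \Delta \hat{z}_{j_1}^{\uparrow}(\bm{x},\bm{w}) + \Delta \hat{z}_{j_2}^{\uparrow}(\bm{x},\bm{w}).
\]
The linear cost terms $c_{j_1}+c_{j_2}$ match on both sides, and for each $i\in M$ the penalty contribution reduces to the convexity inequality $f_i(t_i+a_{ij_1}+a_{ij_2})+f_i(t_i)\ge f_i(t_i+a_{ij_1})+f_i(t_i+a_{ij_2})$, where $t_i=\sum_j a_{ij}x_j$ is the current coverage. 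This inequality is immediate because $f_i$ is convex piecewise-linear (a single kink at $t=b_i$) and the two argument pairs have equal sum with the outer pair more spread. Combining the two steps gives $\Delta \hat{z}_{j_1,j_2}(\bm{x},\bm{w})\ge 0$.

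\emph{Case $x_{j_1}=x_{j_2}=1$.} Flipping any subset of $1$'s to $0$ only decreases the block sums, so both 1-flips as well as the 2-flip are automatically GUB-feasible. Local optimality therefore gives $\Delta \hat{z}_{j_k}^{\downarrow}(\bm{x},\bm{w})\ge 0$ for $k=1,2$. A mirror application of the same convexity identity, now with the coverage counts decreasing by $a_{ij_k}$, yields $\Delta \hat{z}_{j_1,j_2}(\bm{x},\bm{w})\ge \Delta \hat{z}_{j_1}^{\downarrow}(\bm{x},\bm{w})+\Delta \hat{z}_{j_2}^{\downarrow}(\bm{x},\bm{w})\ge 0$.

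The main obstacle is the subadditive bound itself: when $i\in S_{j_1}\cap S_{j_2}$ the simultaneous flip crosses the kink of $\max\{b_i-t,0\}$ twice, while each single flip crosses it at most once, so naively summing single-flip penalty changes can undercount the true 2-flip change—depending on the slack $b_i-t_i$. Framing the bound as the convexity inequality above handles every integer value of the slack uniformly, avoiding a case split on whether $i$ lies in $M_L$, $M_E$, or the ``oversupplied'' complement, and it is the cleanest way I see to close the argument.
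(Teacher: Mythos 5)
Your proof is correct and follows essentially the same route as the paper's: the contrapositive, the same two cases, single-flip local optimality giving $\Delta\hat{z}_{j_k}^{\uparrow}(\bm{x},\bm{w})\ge 0$ or $\Delta\hat{z}_{j_k}^{\downarrow}(\bm{x},\bm{w})\ge 0$ (after the same observation that GUB-feasibility of the 2-flip implies that of each 1-flip), and the fact that the two-flip change dominates the sum of the single-flip changes. The only difference is in packaging: where you invoke convexity of $t\mapsto w_i\max\{b_i-t,0\}$ to get the subadditive bound, the paper writes the cross term out exactly as $\sum_{i\in M_{-}(\bm{x})\cap S_{j_1}\cap S_{j_2}} w_i$ (resp.\ $\sum_{i\in M_{+}(\bm{x})\cap S_{j_1}\cap S_{j_2}} w_i$), which is precisely the nonnegative slack that your convexity inequality discards.
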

\begin{proof}
See \ref{app:nb1}. \qed
\end{proof}
This lemma indicates that in searching for improved solutions in $\mathrm{NB}_2(\bm{x}) \setminus \mathrm{NB}_1(\bm{x})$, it is not necessary to consider the simultaneous flip of variables $x_{j_1}$ and $x_{j_2}$ such that $x_{j_1} = x_{j_2} = 0$ or $x_{j_1} = x_{j_2} = 1$.
Based on this, we consider only the set of solutions obtainable by flipping $x_{j_1} = 1 \to 0$ and $x_{j_2} = 0 \to 1$ simultaneously.
We assume that $\sum_{j \in G_h} x_j < d_h$ holds for the block $G_h$ containing $j_2$ or $j_1$ and $j_2$ are in the same block $G_h$, because otherwise the move is infeasible.
Let 
\begin{equation}
\label{eq:nb2}
\Delta \hat{z}_{j_1,j_2}(\bm{x},\bm{w}) = \Delta \hat{z}_{j_1}^{\downarrow}(\bm{x},\bm{w}) + \Delta \hat{z}_{j_2}^{\uparrow}(\bm{x},\bm{w}) - \sum_{i \in M_E(\bm{x}) \cap S_{j_1} \cap S_{j_2}} w_i
\end{equation}
denote the increase of $\hat{z}(\bm{x},\bm{w})$ in this case.
\begin{lemma}
\label{lem:nb2}
Suppose that a solution $\bm{x}$ is locally optimal with respect to $\mathrm{NB}_1(\bm{x})$, $x_{j_1} = 1$ and $x_{j_2} = 0$.
Then $\Delta \hat{z}_{j_1,j_2} (\bm{x},\bm{w}) < 0$ holds, only if at least one of the following two conditions holds.
\begin{description}
\item [\textnormal{(i)}] Both $j_1$ and $j_2$ belong to the same block $G_h$ satisfying $\sum_{j \in G_h} x_j = d_h$.
\item [\textnormal{(ii)}] $M_E(\bm{x}) \cap S_{j_1} \cap S_{j_2} \not= \emptyset$.
\end{description}
\end{lemma}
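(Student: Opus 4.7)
The plan is to prove the lemma by contraposition: assuming $\bm{x}$ is $\mathrm{NB}_1$-locally optimal, $x_{j_1}=1$, $x_{j_2}=0$, and both conditions (i) and (ii) fail, I will show $\Delta \hat{z}_{j_1,j_2}(\bm{x},\bm{w}) \ge 0$. The decomposition in equation~(\ref{eq:nb2}) is the workhorse: it expresses the 2-flip gain as the sum of the two corresponding 1-flip gains minus a correction term supported on $M_E(\bm{x}) \cap S_{j_1} \cap S_{j_2}$. So the task reduces to showing (a) both 1-flip gains are individually nonnegative and (b) the correction term vanishes.

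For part (b) this is immediate: the negation of~(ii) says $M_E(\bm{x}) \cap S_{j_1} \cap S_{j_2} = \emptyset$, so the last sum in~(\ref{eq:nb2}) is zero. For part (a), the flip $x_{j_1}:1\to 0$ is always GUB-feasible since it only decreases a block sum, so $\mathrm{NB}_1$-local optimality directly gives $\Delta\hat{z}_{j_1}^{\downarrow}(\bm{x},\bm{w})\ge 0$. The subtler claim is that $\Delta\hat{z}_{j_2}^{\uparrow}(\bm{x},\bm{w})\ge 0$, which requires that the lone flip $x_{j_2}:0\to 1$ be GUB-feasible in the first place (otherwise local optimality says nothing about it).

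To establish that feasibility, I will split on whether $j_1$ and $j_2$ share a block. If they lie in different blocks, say $j_2 \in G_{h_2}$, then the combined flip (which is assumed feasible) increases $\sum_{j\in G_{h_2}} x_j$ by exactly one, so $\sum_{j\in G_{h_2}} x_j < d_{h_2}$ must already hold before the flip; hence the single flip $x_{j_2}:0\to 1$ alone is feasible. If instead $j_1,j_2$ belong to the same block $G_h$, the negation of~(i) gives $\sum_{j\in G_h} x_j < d_h$, so again the lone flip $x_{j_2}:0\to 1$ keeps the sum at most $d_h$. Either way the lone flip is admissible, and local optimality supplies $\Delta\hat{z}_{j_2}^{\uparrow}(\bm{x},\bm{w})\ge 0$.

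Combining, equation~(\ref{eq:nb2}) yields $\Delta\hat{z}_{j_1,j_2}(\bm{x},\bm{w}) = \Delta\hat{z}_{j_1}^{\downarrow}(\bm{x},\bm{w}) + \Delta\hat{z}_{j_2}^{\uparrow}(\bm{x},\bm{w}) \ge 0$, contradicting $\Delta\hat{z}_{j_1,j_2}(\bm{x},\bm{w})<0$ and completing the contrapositive. The main obstacle is really a careful bookkeeping issue rather than a conceptual one: making sure the case analysis on whether $j_1,j_2$ share a block correctly exploits the GUB-feasibility of the 2-flip move to deduce GUB-feasibility of the individual 1-flip of $x_{j_2}$. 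Once that is handled, the identity~(\ref{eq:nb2}) delivers the conclusion immediately.
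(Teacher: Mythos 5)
Your proof is correct and follows essentially the same route as the paper's: the paper argues by contradiction (a negative 2-flip gain with both conditions failing forces a negative 1-flip gain for either $j_1$ or $j_2$, and the feasibility of the lone flip of $x_{j_2}$ is justified by the same case split on whether the two columns share a block), which is just the contrapositive framing of your argument. The only cosmetic difference is the direction of the implication; the decomposition via equation~(\ref{eq:nb2}) and the GUB-feasibility bookkeeping are identical.
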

\begin{proof}
See \ref{app:nb2}. \qed
\end{proof}
\begin{lemma}
\label{lem:nb3}
Suppose that a solution $\bm{x}$ is locally optimal with respect to $\mathrm{NB}_1(\bm{x})$, and for a block $G_h$ and a pair of indices $j_1, j_2 \in G_h$ with $x_{j_1} = 1$ and $x_{j_2} = 0$, $\Delta \hat{z}_{j_1,j_2}(\bm{x},\bm{w}) < 0$, $M_E(\bm{x}) \cap S_{j_1} \cap S_{j_2} = \emptyset$ and $\sum_{j \in G_h} x_j = d_h$ hold.
Let $j_1^{\ast} = \mathrm{arg} \min_{j \in G_h} \Delta \hat{z}_j^{\downarrow}(\bm{x},\bm{w})$ and $j_2^{\ast} = \mathrm{arg} \min_{j \in G_h} \Delta \hat{z}_j^{\uparrow}(\bm{x},\bm{w})$.
Then we have $\Delta \hat{z}_{j_1^{\ast},j_2^{\ast}}(\bm{x},\bm{w}) < 0$.
\end{lemma}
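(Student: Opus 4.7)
The strategy is a pure monotonicity argument via the decomposition formula (\ref{eq:nb2}). The hypothesis $M_E(\bm{x}) \cap S_{j_1} \cap S_{j_2} = \emptyset$ is the crucial ingredient: it collapses the double-flip gain at the witness pair into a plain sum of two single-flip gains, which then allows componentwise replacement by the argmin indices.

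Concretely, I would proceed in three short steps. First, plug the witness pair $(j_1, j_2)$ into (\ref{eq:nb2}); the empty-overlap hypothesis kills the last term, yielding
\[
\Delta \hat{z}_{j_1}^{\downarrow}(\bm{x},\bm{w}) + \Delta \hat{z}_{j_2}^{\uparrow}(\bm{x},\bm{w}) = \Delta \hat{z}_{j_1,j_2}(\bm{x},\bm{w}) < 0.
\]
Second, invoke the definition of $j_1^{\ast}$ and $j_2^{\ast}$ as minimizers inside $G_h$, interpreted over $\{j \in G_h : x_j = 1\}$ and $\{j \in G_h : x_j = 0\}$ respectively, so that the corresponding flips actually make sense; both subsets are nonempty because the witness indices $j_1$ and $j_2$ already lie in them. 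These give the inequalities $\Delta \hat{z}_{j_1^{\ast}}^{\downarrow}(\bm{x},\bm{w}) \le \Delta \hat{z}_{j_1}^{\downarrow}(\bm{x},\bm{w})$ and $\Delta \hat{z}_{j_2^{\ast}}^{\uparrow}(\bm{x},\bm{w}) \le \Delta \hat{z}_{j_2}^{\uparrow}(\bm{x},\bm{w})$, hence their sum is still strictly negative. Third, apply (\ref{eq:nb2}) to the pair $(j_1^{\ast}, j_2^{\ast})$ itself; feasibility of this swap is automatic because it lies within the same saturated block and keeps its count at $d_h$. The new overlap term $\sum_{i \in M_E(\bm{x}) \cap S_{j_1^{\ast}} \cap S_{j_2^{\ast}}} w_i$ may now be strictly positive, but since $\bm{w} \ge \bm{0}$ it enters with a minus sign and therefore only pushes $\Delta \hat{z}_{j_1^{\ast}, j_2^{\ast}}(\bm{x},\bm{w})$ further below zero.

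No step is a genuine obstacle; the argument is elementary once (\ref{eq:nb2}) is in hand, and the 1-flip local optimality hypothesis is not even used directly in the chain of inequalities (it is already baked into the setup through the fact that the improving move must be a swap inside a saturated block). The one point that requires care is the reading of the argmins: $\Delta \hat{z}_j^{\downarrow}$ only matches a real flip-change on $\{j \in G_h : x_j = 1\}$ and $\Delta \hat{z}_j^{\uparrow}$ only on $\{j \in G_h : x_j = 0\}$, so the optimizations defining $j_1^{\ast}$ and $j_2^{\ast}$ should be taken over these restricted subsets in order for (\ref{eq:nb2}) to apply verbatim to $(j_1^{\ast}, j_2^{\ast})$ in the final step.
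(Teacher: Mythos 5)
Your proposal is correct and follows essentially the same route as the paper's proof: drop the overlap term at the witness pair via the empty-intersection hypothesis, replace each summand by its minimum over $G_h$, and observe that the overlap term for $(j_1^{\ast},j_2^{\ast})$ is subtracted with nonnegative weights and so can only help. Your added remark that the argmins must be read over $\{j \in G_h : x_j = 1\}$ and $\{j \in G_h : x_j = 0\}$ respectively is a fair point of care that the paper leaves implicit.
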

\begin{proof}
See \ref{app:nb3}. \qed
\end{proof}

Note that the condition of Lemma~\ref{lem:nb3} implies that the condition~(i) of Lemma~\ref{lem:nb2} is satisfied.
We can conclude that to find an improved solution satisfying the condition~(i), it suffices to check only one pair for each block $G_h$ satisfying $\sum_{j \in G_h} x_j = d_h$, instead of checking all pairs $(j_1, j_2)$ with $j_1, j_2 \in G_h$, $x_{j_1} = 1$ and $x_{j_2} = 0$ (provided that the algorithm also checks the solutions satisfying the condition~(ii) of Lemma~\ref{lem:nb2}).

The algorithm first searches for an improved solution $\bm{x}^{\prime} \in \mathrm{NB}_2(\bm{x}) \setminus \mathrm{NB}_1(\bm{x})$ satisfying the condition~(i).
For each block $G_h$ ($h \in K$) satisfying $\sum_{j \in G_h} x_j = d_h$, it checks the solution obtained by flipping $x_{j_1} = 1 \to 0$ and $x_{j_2} = 0 \to 1$ for $j_1$ and $j_2$ in $G_h$ with the minimum $\Delta \hat{z}_{j_1}^{\downarrow}(\bm{x},\bm{w})$ and $\Delta \hat{z}_{j_2}^{\uparrow}(\bm{x},\bm{w})$, respectively.
The algorithm then searches for an improved solution $\bm{x}^{\prime} \in \mathrm{NB}_2(\bm{x}) \setminus \mathrm{NB}_1(\bm{x})$ satisfying the condition~(ii).
Let $\mathrm{NB}_2^{(j_1)}(\bm{x})$ denote the subset of $\mathrm{NB}_2(\bm{x})$ obtainable by flipping $x_{j_1} = 1 \to 0$.
The algorithm searches $\mathrm{NB}_2^{(j_1)}(\bm{x})$ for each $j_1 \in X(\bm{x})$ in the ascending order of $\Delta \hat{z}_{j_1}^{\downarrow}(\bm{x},\bm{w})$.
If an improved solution is found, it chooses a pair $j_1$ and $j_2$ with the minimum $\Delta \hat{z}_{j_1,j_2}(\bm{x},\bm{w})$ among those in $\mathrm{NB}_2^{(j_1)}(\bm{x})$.

Algorithm 2-FNLS searches $\mathrm{NB}_1(\bm{x})$ first, and then $\mathrm{NB}_2(\bm{x}) \setminus \mathrm{NB}_1(\bm{x})$.
The algorithm is formally described as follows.

\bigskip

\noindent\underline{{\bfseries Algorithm 2-FNLS}$(\bm{x},\bm{w})$}
\begin{description}
\item [Input:] A solution $\bm{x}$ and a penalty weight vector $\bm{w}$.

\item [Output:] A solution $\bm{x}$.

\item [Step~1:] If $I_1^{\uparrow}(\bm{x}) = \{ j \in N \setminus X(\bm{x}) \mid \Delta \hat{z}_j^{\uparrow}(\bm{x},\bm{w}) < 0, \; \sum_{j^{\prime} \in G_h} x_{j^{\prime}} < d_h \; \textrm{for} \; \textrm{the} \; \textrm{block} \; G_h \allowbreak \; \textrm{containing} \; j \} \not= \emptyset$ holds, choose $j \in I_1^{\uparrow}(\bm{x})$ with the minimum $\Delta \hat{z}_j^{\uparrow}(\bm{x},\bm{w})$, set $x_j \leftarrow 1$ and return to Step~1.

\item [Step~2:] If $I_1^{\downarrow}(\bm{x}) = \{ j \in X(\bm{x}) \mid \Delta \hat{z}_j^{\downarrow}(\bm{x},\bm{w}) < 0 \} \not= \emptyset$ holds, choose $j \in I_1^{\downarrow}(\bm{x})$ with the minimum $\Delta \hat{z}_j^{\downarrow}(\bm{x},\bm{w})$, set $x_j \leftarrow 0$ and return to Step~2.

\item [Step~3:] For each block $G_h$ satisfying $\sum_{j \in G_h} x_j = d_h$ ($h \in K$), if $\Delta \hat{z}_{j_1,j_2}(\bm{x},\bm{w}) < 0$ holds for $j_1$ and $j_2$ with the minimum $\Delta \hat{z}_{j_1}^{\downarrow}(\bm{x},\bm{w})$ and $\Delta \hat{z}_{j_2}^{\uparrow}(\bm{x},\bm{w})$ ($j_1, j_2 \in G_h$), respectively, set $x_{j_1} \leftarrow 0$ and $x_{j_2} \leftarrow 1$.
If the current solution $\bm{x}$ has been updated at least once in Step~3, return to Step~3.

\item [Step~4:] For each $j_1 \in X(\bm{x})$ in the ascending order of $\Delta \hat{z}_{j_1}^{\downarrow}(\bm{x},\bm{w})$, if $I_2(\bm{x}) = \{ j_2 \in N \setminus X(\bm{x}) \mid \Delta \hat{z}_{j_1,j_2}(\bm{x},\bm{w}) < 0, \; ( \sum_{j \in G_h} x_j < d_h \; \textnormal{for} \; \textrm{the} \; \textrm{block} \; G_h \allowbreak \; \textrm{containing} \; j_2 \; \textnormal{or} \; \exists h, \; j_1, j_2 \in G_h), M_E(\bm{x}) \cap S_{j_1} \cap S_{j_2} \not= \emptyset \} \not= \emptyset$ holds, choose $j_2 \in I_2(\bm{x})$ with the minimum $\Delta \hat{z}_{j_1,j_2}(\bm{x},\bm{w})$ and set $x_{j_1} \leftarrow 0$ and $x_{j_2} \leftarrow 1$.
If the current solution $\bm{x}$ has been updated at least once in Step~4, return to Step~1; otherwise output $\bm{x}$ and exit.
\end{description}

We note that 2-FNLS does not necessarily output a locally optimal solution with respect to $\mathrm{NB}_2(\bm{x})$, because the solution $\bm{x}$ is not necessarily locally optimal with respect to $\mathrm{NB}_1(\bm{x})$ in Steps~3 and 4.
Though it is easy to keep the solution $\bm{x}$ locally optimal with respect to $\mathrm{NB}_1(\bm{x})$ in Steps~3 and 4 by returning to Step~1 whenever an improved solution is obtained in Steps~2 or 3, we did not adopt this option because it consumes much computing time just to conclude that the current solution is locally optimal with respect to $\mathrm{NB}_1(\bm{x})$ in most cases.
We also note that the phase to search $\mathrm{NB}_1(\bm{x})$ in the algorithm (i.e., Steps~1 and 2) always finishes with the search for an improved solution obtainable by flipping $x_j = 1 \to 0$ to prevent this phase from stopping at solutions having redundant columns.

Let one-round be the computation needed to find an improved solution in the neighborhood or to conclude that the current solution is locally optimal, including the time to update relevant data structures and/or memory \cite{YagiuraM1999,YagiuraM2001}.
If implemented naively, 2-FNLS requires $\mathrm{O}(\sigma)$ and $\mathrm{O}(n\sigma)$ one-round time for $\mathrm{NB}_1(\bm{x})$ and $\mathrm{NB}_2(\bm{x}) \setminus \mathrm{NB}_1(\bm{x})$, respectively, where $\sigma = \sum_{i \in M} \sum_{j \in N} a_{ij}$.
In order to improve computational efficiency, we keep the following auxiliary data
\begin{equation}
\begin{array}{lcll}
\Delta p_j^{\uparrow}(\bm{x},\bm{w}) &=& \displaystyle\sum_{i \in M_L(\bm{x}) \cap S_j} w_i, & \quad \quad j \in N \setminus X(\bm{x}),\\
\Delta p_j^{\downarrow}(\bm{x},\bm{w}) &=& \displaystyle\sum_{i \in (M_L(\bm{x}) \cup M_E(\bm{x})) \cap S_j} w_i, & \quad \quad j \in X(\bm{x}),
\end{array}
\end{equation}
in memory to compute each $\Delta \hat{z}_j^{\uparrow}(\bm{x},\bm{w}) = c_j - \Delta p_j^{\uparrow}(\bm{x},\bm{w})$ and $\Delta \hat{z}_j^{\downarrow}(\bm{x},\bm{w}) = - c_j + \Delta p_j^{\downarrow}(\bm{x},\bm{w})$ in $\mathrm{O}(1)$ time.
We also keep the values of $s_i(\bm{x}) = \sum_{j \in N} a_{ij} x_j$ ($i \in M$) in memory to update the values of $\Delta p_j^{\uparrow}(\bm{x},\bm{w})$ and $\Delta p_j^{\downarrow}(\bm{x},\bm{w})$ for $j \in N$ in $\mathrm{O}(\tau)$ time when $\bm{x}$ is changed, where $\tau = \max_{j \in N} \sum_{i \in S_j} |N_i|$ (see \ref{app:eval}).

We first consider the one-round time for $\mathrm{NB}_1(\bm{x})$.
In Steps~1 and 2, the algorithm finds $j \in N \setminus X(\bm{x})$ and $j \in X(\bm{x})$ with the minimum $\Delta \hat{z}_j^{\downarrow}(\bm{x},\bm{w})$ and $\Delta \hat{z}_j^{\uparrow}(\bm{x},\bm{w})$ in $\mathrm{O}(n)$ time, respectively, by using the auxiliary data whose update requires $\mathrm{O}(\tau)$ time.
Thus, the one-round time is reduced to $\mathrm{O}(n + \tau)$ for $\mathrm{NB}_1(\bm{x})$.

We next consider the one-round time for $\mathrm{NB}_2(\bm{x}) \setminus \mathrm{NB}_1(\bm{x})$.
In Step~3, the algorithm first finds $j_1$ and $j_2$ with the minimum $\Delta \hat{z}_{j_1}^{\downarrow}(\bm{x},\bm{w})$ and $\Delta \hat{z}_{j_2}^{\uparrow}(\bm{x},\bm{w})$ ($j_1, j_2 \in G_h$), respectively, in $\mathrm{O}(|G_h|)$ time.
The algorithm then evaluates $\Delta \hat{z}_{j_1,j_2}(\bm{x,\bm{w}})$ in $\mathrm{O}(\nu)$ time by using (\ref{eq:nb2}), where $\nu = \max_{j \in N} |S_j|$.
In Step~4, the algorithm first flips $x_{j_1} = 1 \to 0$ and temporarily updates the values of $s_i(\bm{x})$ ($i \in S_{j_1}$), $\Delta p_l^{\uparrow}(\bm{x},\bm{w})$ and $\Delta p_l^{\downarrow}(\bm{x},\bm{w})$ ($l \in N_i$, $i \in S_{j_1}$) in $\mathrm{O}(\tau)$ time so that the memory corresponding to these keeps the values of $s_i(\bm{x}^{\prime})$, $\Delta p_l^{\uparrow}(\bm{x}^{\prime},\bm{w})$ and $\Delta p_l^{\downarrow}(\bm{x}^{\prime},\bm{w})$ for the $\bm{x}^{\prime}$ obtained from $\bm{x}$ by flipping $x_{j_1} = 1 \to 0$.
Then, for searching $\mathrm{NB}_2^{(j_1)}(\bm{x})$, the algorithm evaluates 
\begin{equation}
\begin{array}{lll}
\Delta \hat{z}_{j_1,j_2}(\bm{x},\bm{w}) & = & \Delta \hat{z}_{j_1}^{\downarrow}(\bm{x},\bm{w}) + \Delta \hat{z}_{j_2}^{\uparrow}(\bm{x}^{\prime},\bm{w})\\
& = & \Delta \hat{z}_{j_1}^{\downarrow}(\bm{x},\bm{w}) + c_{j_2} - \Delta p_{j_2}^{\uparrow}(\bm{x}^{\prime},\bm{w})
\end{array}
\end{equation} (in $\mathrm{O}(1)$ time for each pair of $j_1$ and $j_2$) only for each $j_2 \in N \setminus X(\bm{x})$ such that the value of $\Delta p_{j_2}^{\uparrow}(\bm{x},\bm{w})$ has been changed to $\Delta p_{j_2}^{\uparrow}(\bm{x}^{\prime},\bm{w})$ ($\not= \Delta p_{j_2}^{\uparrow}(\bm{x},\bm{w})$) during the temporary update.
Note that the number of such candidates $j_2$ that satisfy $\Delta p_{j_2}^{\uparrow}(\bm{x}^{\prime},\bm{w}) \not= \Delta p_{j_2}^{\uparrow}(\bm{x},\bm{w})$ is $\mathrm{O}(\tau)$.
When an improved solution was not found in $\mathrm{NB}_2^{(j_1)}(\bm{x})$, the updated memory values are restored in $\mathrm{O}(\tau)$ time to the original values $s_i(\bm{x})$, $\Delta p_l^{\uparrow}(\bm{x},\bm{w})$ and $\Delta p_l^{\downarrow}(\bm{x},\bm{w})$ before we try another candidate of $j_1$.
The time to search $\mathrm{NB}_2^{(j_1)}(\bm{x})$ for each $j_1 \in X(\bm{x})$ is therefore $\mathrm{O}(\tau)$.
Thus, the one-round time is reduced to $\mathrm{O}(n + k \nu + n^{\prime} \tau)$ for $\mathrm{NB}_2(\bm{x}) \setminus \mathrm{NB}_1(\bm{x})$, where $n^{\prime} = \sum_{j \in N} x_j = |X(\bm{x})|$.

Because $k \le n$, $\nu \le m$, $\tau \le \sigma$, $n^{\prime} \le n$, $m \le \sigma$, and $n \le \sigma$ always hold, these orders are not worse than those of naive implementation, and they are much better if $\nu \ll m$, $\tau \ll \sigma$ and $n^{\prime} \ll n$ hold, which are the case for many instances.
We also note that the computation time for updating the auxiliary data has little effect on the total computation time of 2-FNLS, because, in most cases, the number of solutions actually visited is much less than that of evaluated neighbor solutions.

\section{Adaptive control of penalty weights\label{sec:weighting}}
We observed that 2-FNLS tends to be attracted to locally optimal solutions of insufficient quality when the penalty weights $w_i$ are large.
We accordingly incorporate a mechanism to adaptively control the values of $w_i$ ($i \in M$) \cite{YagiuraM2006,NonobeK2001,YagiuraM2004}; the algorithm iteratively applies 2-FNLS, updating the penalty weight vector $\bm{w}$ after each call to 2-FNLS.
We call such a sequence of calls to 2-FNLS the \emph{weighting local search} (WLS) according to \cite{SelmanB1993,ThorntonJ2005}.

Let $\bm{x}$ denote the solution at which the previous 2-FNLS stops.
Algorithm WLS resumes 2-FNLS from $\bm{x}$ after updating the penalty weight vector $\bm{w}$.
Starting from an initial penalty weight vector $\bm{w} \leftarrow \bar{\bm{w}}$, where we set $\bar{w}_i = \sum_{j \in N} c_j + 1$ for all $i \in M$, the penalty weight vector $\bm{w}$ is updated as follows.
Let $\bm{x}^{\scalebox{0.5}{\textnormal{best}}}$ denote the best solution obtained in the current call to WLS with respect to the penalized objective function $\hat{z}(\bm{x},\bar{\bm{w}})$ with the initial penalty weight vector $\bar{\bm{w}}$.
If $\hat{z}(\bm{x},\bm{w}) \ge \hat{z}(\bm{x}^{\scalebox{0.5}{\textnormal{best}}},\bar{\bm{w}})$ holds, WLS uniformly decreases the penalty weights $w_i \leftarrow (1 - \eta) w_i$ for all $i \in M$, where the parameter $\eta$ is decided so that for 15\% of variables satisfying $x_j = 1$, the new value of $\Delta \hat{z}_j^{\downarrow}(\bm{x},\bm{w})$ becomes negative.
Otherwise, WLS increases the penalty weights by
\begin{equation}
\label{eq:inc_penalty}
w_i \leftarrow \min \left\{ w_i \left( 1 + \delta \frac{y_i(\bm{x})}{\max_{l \in M} y_l(\bm{x})} \right), \bar{w}_i \right\}, \quad i \in M,
\end{equation}
where $y_i(\bm{x}) = \max\{ b_i - \sum_{j \in N} a_{ij} x_j, 0 \}$ is the amount of violation of the $i$th multicover constraint, and $\delta$ is a parameter that is set to 0.2 in our computational experiments.
Algorithm WLS iteratively applies 2-FNLS, updating the penalty weight vector $\bm{w}$ after each call to 2-FNLS, until the best solution $\bm{x}^{\scalebox{0.5}{\textnormal{best}}}$ with respect to $\hat{z}(\bm{x},\bar{\bm{w}})$ obtained in the current call to WLS has not improved in the last 50 iterations.

\bigskip

\noindent\underline{{\bfseries Algorithm WLS}$(\bm{x})$}
\begin{description}
\item [Input:] A solution $\bm{x}$.

\item [Output:] A solution $\hat{\bm{x}}$ and the best solution $\bm{x}^{\scalebox{0.5}{\textnormal{best}}}$ with respect to $\hat{z}(\bm{x},\bar{\bm{w}})$ obtained in the current call to WLS.

\item [Step~1:] Set $iter \leftarrow 0$, $\bm{x}^{\scalebox{0.5}{\textnormal{best}}} \leftarrow \bm{x}$, $\hat{\bm{x}} \leftarrow \bm{x}$ and $\bm{w} \leftarrow \bar{\bm{w}}$.

\item [Step~2:] Apply $\textnormal{2-FNLS}(\hat{\bm{x}}, \bm{w})$ to obtain an improved solution $\hat{\bm{x}}^{\prime}$ and then set $\hat{\bm{x}} \leftarrow \hat{\bm{x}}^{\prime}$.
Let $\bm{x}^{\prime}$ be the best solution with respect to $\hat{z}(\bm{x},\bar{\bm{w}})$ obtained during the call to $\textnormal{2-FNLS}(\hat{\bm{x}}, \bm{w})$.

\item [Step~3:] If $\hat{z}(\bm{x}^{\prime},\bar{\bm{w}}) < \hat{z}(\bm{x}^{\scalebox{0.5}{\textnormal{best}}},\bar{\bm{w}})$ holds, then set $\bm{x}^{\scalebox{0.5}{\textnormal{best}}} \leftarrow \bm{x}^{\prime}$ and $iter \leftarrow 0$; otherwise, set $iter \leftarrow iter + 1$.
If $iter \ge 50$ holds, output $\hat{\bm{x}}$ and $\bm{x}^{\scalebox{0.5}{\textnormal{best}}}$ and halt.

\item [Step~4:] If $\hat{z}(\hat{\bm{x}},\bm{w}) \ge \hat{z}(\bm{x}^{\scalebox{0.5}{\textnormal{best}}},\bar{\bm{w}})$ holds, then uniformly decrease the penalty weights $w_i$ for all $i \in M$ by $w_i \leftarrow (1 - \eta) w_i$; otherwise, increase the penalty weights $w_i$ for all $i \in M$ by (\ref{eq:inc_penalty}).
Return to Step~2.
\end{description}

\section{Heuristic algorithms to reduce the size of instances\label{sec:reduction}}
For a near optimal Lagrangian multiplier vector $\bm{u}$, the Lagrangian costs $\tilde{c}_j(\bm{u})$ give reliable information on the overall utility of selecting columns $j \in N$ for SCP.
Based on this property, the Lagrangian costs $\tilde{c}_j(\bm{u})$ are often utilized to solve huge instances of SCP.
Similar to the pricing method for solving the Lagrangian dual problem, several heuristic algorithms successively solve a number of subproblems, also called core problems, consisting of a small subset of columns $C \subseteq N$ ($|C| \ll |N|$), chosen among those having low Lagrangian costs $\tilde{c}_j(\bm{u})$ ($j \in C$) \cite{CapraraA1999,CasertaM2007,CeriaS1998,YagiuraM2006}.
The Lagrangian costs $\tilde{c}_j(\bm{u})$ are unfortunately unreliable for selecting columns $j \in N$ for \mbox{SMCP-GUB}, because GUB constraints often prevent solutions from containing more than $d_h$ variables $x_j$ with the lowest Lagrangian costs $\tilde{c}_j(\bm{u})$.
To overcome this problem, we develop two evaluation schemes of columns $j \in N$ for \mbox{SMCP-GUB}.

Before updating the core problem $C$ for every call to WLS, the algorithm heuristically fixes some variables $\hat{x}_j \leftarrow 1$ to reflect the characteristics of the incumbent solution $\bm{x}^{\ast}$ and the current solution $\hat{\bm{x}}$.
Let $\bm{u}$ be a near optimal Lagrangian multiplier vector, and $V = \{ j \in N \mid x_j^{\ast} = \hat{x}_j = 1 \}$ be an index set from which variables to be fixed are chosen.
Let $\tilde{c}_{\max}(\bm{u}) = \max_{j \in V} \tilde{c}_j(\bm{u})$ be the maximum value of the Lagrangian cost $\tilde{c}_j(\bm{u})$ ($j \in V$).
The algorithm randomly chooses a variable $x_j$ ($j \in V$) with probability
\begin{equation}
\label{eq:fix_prob}
\mathrm{prob}_j(\bm{u}) = \frac{\tilde{c}_{\max}(\bm{u}) - \tilde{c}_j(\bm{u})}{\sum_{l \in V} (\tilde{c}_{\max}(\bm{u}) - \tilde{c}_l(\bm{u}))}
\end{equation}
and fixes $\hat{x}_j \leftarrow 1$.
We note that the uniform distribution is used if $\tilde{c}_{\max}(\bm{u}) = \tilde{c}_j(\bm{u})$ holds for all $j \in V$.
The algorithm iteratively chooses and fixes a variable $x_j$ ($j \in V$) until $\sum_{j \in N} a_{ij} x_j \ge b_i$ holds for 20\% of multicover constraints $i \in M$.
It then updates the Lagrangian multiplier $u_i \leftarrow 0$ if $\sum_{j \in F} a_{ij} \ge b_i$ holds for $i \in M$, and computes the Lagrangian costs $\tilde{c}_j(\bm{u})$ for $j \in N \setminus F$, where $F$ is the index set of the fixed variables.
The variable fixing procedure is formally described as follows.

\bigskip

\noindent\underline{{\bfseries Algorithm FIX}$(\bm{x}^{\ast},\hat{\bm{x}},\tilde{\bm{u}})$}
\begin{description}
\item [Input:] The incumbent solution $\bm{x}^{\ast}$, the current solution $\hat{\bm{x}}$ and a near optimal Lagrangian multiplier vector $\tilde{\bm{u}}$.
\item [Output:] A set of fixed variables $F \subset N$ and a Lagrangian multiplier vector $\bm{u}$.
\item [Step~1:] Set $V \leftarrow \{ j \in N \mid x_j^{\ast} = \hat{x}_j = 1 \}$, $F \leftarrow \emptyset$, and $\bm{u} \leftarrow \tilde{\bm{u}}$.
\item [Step~2:] If $|\{ i \in M \mid \sum_{j \in F} a_{ij} \ge b_i \}| \ge 0.2 m$ holds, then set $u_i \leftarrow 0$ for each $i \in M$ satisfying $\sum_{j \in F} a_{ij} \ge b_i$, output $F$ and $\bm{u}$, and halt.
\item [Step~3:] Randomly choose a column $j \in V$ with probability $\mathrm{prob}_j(\bm{u})$ defined by (\ref{eq:fix_prob}), and set $F \leftarrow F \cup \{ j \}$.
Return to Step~2.
\end{description}
Subsequent to the variable fixing procedure, the algorithm updates the instance to be considered by setting $z(\bm{x}) = \sum_{j \in N \setminus F} c_j x_j + \sum_{j \in F} c_j$, $b_i \leftarrow \max\{ b_i - \sum_{j \in F} a_{ij}, 0 \}$ ($i \in M$), and $d_h \leftarrow d_h - |G_h \cap F|$ ($h \in K$).

The first evaluation scheme modifies the Lagrangian costs $\tilde{c}_j(\bm{u})$ to reduce the number of redundant columns $j \in C$ resulting from GUB constraints.
For each block $G_h$ ($h \in K$), let $\theta_h$ be the value of the $(d_h+1)$st lowest Lagrangian cost $\tilde{c}_j(\bm{u})$ among those for columns in $G_h$ if $d_h < |G_h|$ holds and $\theta_h \leftarrow 0$ otherwise.
We then define a score $\rho_j$ for $j \in G_h$, called the normalized Lagrangian score, by $\rho_j = \tilde{c}_j(\bm{u}) - \theta_h$ if $\theta_h < 0$ holds, and $\rho_j = \tilde{c}_j(\bm{u})$ otherwise.

The second evaluation scheme modifies the Lagrangian costs $\tilde{c}_j(\bm{u})$ by replacing the Lagrangian multiplier vector $\bm{u}$ with the adaptively controlled penalty weight vector $\bm{w}$.
We define another score $\phi_j$ for $j \in N$, called the pseudo-Lagrangian score, by $\phi_j = \tilde{c}_j(\bm{w})$.
Intuitive meaning of this score is that we consider a column to be promising if it covers many frequently violated constraints in the recent search.
The variable fixing procedure for the second evaluation scheme is described in a similar fashion to that of the first evaluation scheme by replacing the Lagrangian multiplier vectors $\tilde{\bm{u}}$ and $\bm{u}$ with penalty weight vectors $\tilde{\bm{w}}$ and $\bm{w}$, respectively.

Given a score vector $\bm{\rho}$ (resp., $\bm{\phi}$), a core problem is defined by a subset $C \subset N$ consisting of (i)~columns $j \in N_i$ with the $b_i$ lowest scores $\rho_j$ (resp., $\phi_j$) for each $i \in M$, (ii)~columns $j \in N$ with the $10n^{\prime}$ lowest scores $\rho_j$ (resp., $\phi_j$) (recall that we define $n^{\prime} = \sum_{j \in N} x_j$), and (iii)~columns $j \in X(\bm{x}^{\ast}) \cup X(\hat{\bm{x}})$ for the incumbent solution $\bm{x}^{\ast}$ and the current solution $\hat{\bm{x}}$.
The core problem updating procedure is formally described as follows.

\bigskip

\noindent\underline{{\bfseries Algorithm CORE}$(\bm{\rho}, \bm{x}^{\ast}, \hat{\bm{x}})$}
\begin{description}
\item [Input:] A score vector $\bm{\rho}$, the incumbent solution $\bm{x}^{\ast}$ and the current solution $\hat{\bm{x}}$.
\item [Output:] The core problem $C \subset N$.
\item [Step~1:] For each $i \in M$, let $C_1(i)$ be the set of columns $j \in N_i$ with the $b_i$ lowest $\rho_j$ among those in $N_i$.
Then set $C_1 \leftarrow \bigcup_{i \in M} C_1(i)$.
\item [Step~2:] Set $C_2$ be the set of columns $j \in N$ with the $10n^{\prime}$ lowest $\rho_j$.
\item [Step~3:] Set $C \leftarrow C_1 \cup C_2 \cup X(\bm{x}^{\ast}) \cup X(\hat{\bm{x}})$.
Output $C$ and halt.
\end{description}

\section{Path relinking\label{sec:path-relink}}
The path relinking method \cite{GloverF1997} is an evolutionary approach to integrate intensification and diversification strategies.
This approach generates new solutions by exploring trajectories that connect good solutions.
It starts from one of the good solutions, called an initiating solution, and generates a path by iteratively moving to a solution in the neighborhood that leads toward the other solutions, called guiding solutions.

Because it is preferable to apply path relinking method to solutions of high quality, we keep reference sets $R_1$ and $R_2$ of good solutions with respect to the penalized objective functions $\hat{z}(\bm{x},\bm{w})$ and $\hat{z}(\bm{x},\bar{\bm{w}})$ with the current penalty weight vector $\bm{w}$ and the initial penalty weight vector $\bar{\bm{w}}$, respectively.
Initially $R_1$ and $R_2$ are prepared by repeatedly applying a randomized greedy algorithm, which is the same as Steps~1 and 2 of $\textnormal{2-FNLS}(\bm{0},\bar{\bm{w}})$ except for randomly choosing $j \in I_1^{\uparrow}(\bm{x})$ from those with the five lowest $\Delta \hat{z}_j^{\uparrow}(\bm{x},\bar{\bm{w}})$ in Step~1 (recall that we define $\bar{\bm{w}}$ to be the initial penalty weight vector).
Suppose that the last call to WLS stops at a solution $\hat{\bm{x}}$ and $\bm{x}^{\scalebox{0.5}{\textnormal{best}}}$ is the best solution with respect to $\hat{z}(\bm{x},\bar{\bm{w}})$ obtained during the last call to WLS.
Then, the worst solution $\hat{\bm{x}}^{\scalebox{0.5}{\textnormal{worst}}}$ in $R_1$ (with respect to $\hat{z}(\bm{x},\bm{w})$) is replaced with the solution $\hat{\bm{x}}$ if it satisfies $\hat{z}(\hat{\bm{x}},\bm{w}) \le \hat{z}(\hat{\bm{x}}^{\scalebox{0.5}{\textnormal{worst}}},\bm{w})$ and $\hat{\bm{x}} \not= \bm{x}^{\prime}$ for all $\bm{x}^{\prime} \in R_1$.
The worst solution $\bm{x}^{\scalebox{0.5}{\textnormal{worst}}}$ in $R_2$ (with respect to $\hat{z}(\bm{x},\bar{\bm{w}})$) is replaced with the solution $\bm{x}^{\scalebox{0.5}{\textnormal{best}}}$ if it satisfies $\hat{z}(\bm{x}^{\scalebox{0.5}{\textnormal{best}}},\bar{\bm{w}}) \le \hat{z}(\bm{x}^{\scalebox{0.5}{\textnormal{worst}}},\bar{\bm{w}})$ and $\bm{x}^{\scalebox{0.5}{\textnormal{best}}} \not= \bm{x}^{\prime}$ for all $\bm{x}^{\prime} \in R_2$.

The path relinking method first chooses two solutions $\bm{x}^{\scalebox{0.5}{\textnormal{init}}}$ (initiating solution) and $\bm{x}^{\scalebox{0.5}{\textnormal{guide}}}$ (guiding solution) randomly, one from $R_1$ and another from $R_2$, where we assume that $\hat{z}(\bm{x}^{\scalebox{0.5}{\textnormal{init}}},\bm{w}) \le \hat{z}(\bm{x}^{\scalebox{0.5}{\textnormal{guide}}},\bm{w})$ and $\bm{x}^{\scalebox{0.5}{\textnormal{init}}} \not= \hat{\bm{x}}$ hold.
Let $\xi = d(\bm{x}^{\scalebox{0.5}{\textnormal{init}}}, \bm{x}^{\scalebox{0.5}{\textnormal{guide}}})$ be the Hamming distance between solutions $\bm{x}^{\scalebox{0.5}{\textnormal{init}}}$ and $\bm{x}^{\scalebox{0.5}{\textnormal{guide}}}$.
It then generates a sequence $\bm{x}^{\scalebox{0.5}{\textnormal{init}}} = \bm{x}^{(0)}, \bm{x}^{(1)}, \dots, \bm{x}^{(\xi)} = \bm{x}^{\scalebox{0.5}{\textnormal{guide}}}$ of solutions as follows.
Starting from $\bm{x}^{(0)} \leftarrow \bm{x}^{\scalebox{0.5}{\textnormal{init}}}$, for $l = 0, 1, \dots, \xi-1$, the solution $\bm{x}^{(l+1)}$ is defined to be a solution $\bm{x}^{\prime}$ with the best value of $\hat{z}(\bm{x}^{\prime},\bm{w})$ among those satisfying $\bm{x}^{\prime} \in \mathrm{NB}_1(\bm{x}^{(l)})$ and $d(\bm{x}^{\prime},\bm{x}^{\scalebox{0.5}{\textnormal{guide}}}) < d(\bm{x}^{(l)},\bm{x}^{\scalebox{0.5}{\textnormal{guide}}})$.
The algorithm chooses the first solution $\bm{x}^{(l)}$ ($l = 0,1,\dots,\xi-1$) satisfying $\hat{z}(\bm{x}^{(l)},\bm{w}) \le \hat{z}(\bm{x}^{(l+1)},\bm{w})$ as the next initial solution of WLS.

Given a pair of solutions $\hat{\bm{x}}$ and $\bm{x}^{\scalebox{0.5}{\textnormal{best}}}$ and the current reference sets $R_1$ and $R_2$, the path relinking method outputs the next initial solution $\bm{x}$ of WLS and the updated reference sets $R_1$ and $R_2$.
The path relinking method is formally described as follows.

\bigskip

\noindent\underline{{\bfseries Algorithm PRL}$(\hat{\bm{x}},\bm{x}^{\scalebox{0.5}{\textnormal{best}}},R_1,R_2)$}
\begin{description}
\item [Input:] Solutions $\hat{\bm{x}}$ and $\bm{x}^{\scalebox{0.5}{\textnormal{best}}}$ and reference sets $R_1$ and $R_2$.
\item [Output:] The next initial solution $\bm{x}$ of WLS and the updated reference sets $R_1$ and $R_2$.
\item [Step~1:] Let $\displaystyle\hat{\bm{x}}^{\scalebox{0.5}{\textnormal{worst}}} = \textnormal{arg}\max_{\bm{x}^{\prime} \in R_1} \hat{z}(\bm{x}^{\prime},\bm{w})$ be the worst solution in $R_1$.
If the solution $\hat{\bm{x}}$ satisfies $\displaystyle \hat{z}(\hat{\bm{x}},\bm{w}) \le \hat{z}(\hat{\bm{x}}^{\scalebox{0.5}{\textnormal{worst}}},\bm{w})$ and $\hat{\bm{x}} \not= \bm{x}^{\prime}$ for all $\bm{x}^{\prime} \in R_1$, then set $R_1 \leftarrow R_1 \cup \{ \hat{\bm{x}} \} \setminus \{ \hat{\bm{x}}^{\scalebox{0.5}{\textnormal{worst}}} \}$.
\item [Step~2:] Let $\displaystyle\bm{x}^{\scalebox{0.5}{\textnormal{worst}}} = \textnormal{arg}\max_{\bm{x}^{\prime} \in R_2} \hat{z}(\bm{x}^{\prime},\bar{\bm{w}})$ be the worst solution in $R_2$.
If the solution $\bm{x}^{\scalebox{0.5}{\textnormal{best}}}$ satisfies $\displaystyle \hat{z}(\bm{x}^{\scalebox{0.5}{\textnormal{best}}},\bar{\bm{w}}) \le \hat{z}(\bm{x}^{\scalebox{0.5}{\textnormal{worst}}},\bar{\bm{w}})$ and $\bm{x}^{\scalebox{0.5}{\textnormal{best}}} \not= \bm{x}^{\prime}$ for all $\bm{x}^{\prime} \in R_2$, then set $R_2 \leftarrow R_2 \cup \{ \bm{x}^{\scalebox{0.5}{\textnormal{best}}} \} \setminus \{ \bm{x}^{\scalebox{0.5}{\textnormal{worst}}} \}$.
\item [Step~3:] Randomly choose two solutions $\bm{x}^{\scalebox{0.5}{\textnormal{init}}}$ and $\bm{x}^{\scalebox{0.5}{\textnormal{guide}}}$, one from $R_1$ and another from $R_2$, where we assume that $\hat{z}(\bm{x}^{\scalebox{0.5}{\textnormal{init}}},\bm{w}) \le \hat{z}(\bm{x}^{\scalebox{0.5}{\textnormal{guide}}},\bm{w})$ and $\bm{x}^{\scalebox{0.5}{\textnormal{init}}} \not= \hat{\bm{x}}$ hold.
Set $l \leftarrow 0$ and $\bm{x}^{(l)} \leftarrow \bm{x}^{\scalebox{0.5}{\textnormal{init}}}$.
\item [Step~4:] Set $\bm{x}^{(l+1)} \leftarrow \textnormal{arg}\min \{ \hat{z}(\bm{x}^{\prime},\bm{w}) \mid \bm{x}^{\prime} \in \textnormal{NB}_1(\bm{x}^{(l)}), d(\bm{x}^{\prime},\bm{x}^{\scalebox{0.5}{\textnormal{guide}}}) < d(\bm{x}^{(l)},\bm{x}^{\scalebox{0.5}{\textnormal{guide}}}) \}$.
If $\hat{z}(\bm{x}^{(l)},\bm{w}) > \hat{z}(\bm{x}^{(l+1)},\bm{w})$ holds, set $l \leftarrow l+1$ and return to Step~4; otherwise set $\bm{x} \leftarrow \bm{x}^{(l)}$, output $\bm{x}$, $R_1$ and $R_2$, and halt.
\end{description}

\section{Summary of the proposed algorithm\label{sec:summary}}
We summarize the outline of the proposed algorithm for \mbox{SMCP-GUB} in Figure~\ref{fig:outline}.
The first reference sets $R_1$ and $R_2$ of good solutions are generated by repeating the randomized greedy algorithm (Section~\ref{sec:path-relink}).
The first initial solution $\hat{\bm{x}}$ is set to $\hat{\bm{x}} \leftarrow \mathrm{arg} \min_{\bm{x} \in R_1 \cup R_2} \hat{z}(\bm{x},\bar{\bm{w}})$.
When using the normalized Lagrangian score $\rho_j$, the algorithm obtains a near optimal Lagrangian multiplier vector $\tilde{\bm{u}}$ by the basic subgradient method BSM accompanied by a pricing method \cite{UmetaniS2007} (Section~\ref{sec:relax}), where it is applied only once in the entire algorithm.

The algorithm repeatedly applies the following procedures in this order until a given time has run out.
The heuristic variable fixing algorithm $\textnormal{FIX}(\bm{x}^{\ast},\hat{\bm{x}},\tilde{\bm{u}})$ (Section~\ref{sec:reduction}) decides the index set $F$ of variables to be fixed, and it updates the instance to be considered by fixing variables $\hat{x}_j \leftarrow 1$ ($j \in F$).
The heuristic size reduction algorithm $\textnormal{CORE}(\bm{\rho},\bm{x}^{\ast},\hat{\bm{x}})$ (Section~\ref{sec:reduction}) constructs a core problem $C \subset N$ and fixes variables $\hat{x}_j \leftarrow 0$ ($j \in N \setminus C$).
The weighting local search algorithm $\textnormal{WLS}(\hat{\bm{x}})$ explores good solutions $\hat{\bm{x}}$ and $\bm{x}^{\scalebox{0.5}{\textnormal{best}}}$ with respect to $\hat{z}(\bm{x},\bm{w})$ and $\hat{z}(\bm{x},\bar{\bm{w}})$, respectively, by repeating the 2-flip neighborhood local search algorithm $\textnormal{2-FNLS}(\hat{\bm{x}},\bm{w})$ (Section~\ref{sec:local-search}) while updating the penalty weight vector $\bm{w}$ adaptively (Section~\ref{sec:weighting}), where the initial penalty weights $\bar{w}_i$ ($i \in M$) are set to $\bar{w}_i = \sum_{j \in N} c_j + 1$ for all $i \in M$.
After updating the reference sets $R_1$ and $R_2$, the next initial solution $\hat{\bm{x}}$ is generated by the path relinking method $\textnormal{PRL}(\hat{\bm{x}},\bm{x}^{\scalebox{0.5}{\textnormal{best}}}, R_1, R_2)$ (Section~\ref{sec:path-relink}).

\begin{figure}[ht]
\centering
\includegraphics[scale=0.7]{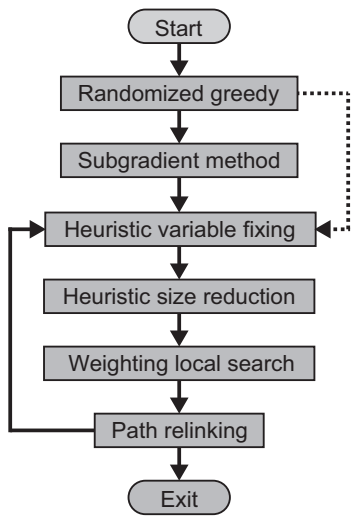}
\caption{Outline of the proposed algorithm for \mbox{SMCP-GUB}\label{fig:outline}}
\end{figure}

\section{Computational results\label{sec:result}}
We first prepared eight classes of random instances for SCP, among which classes G and H were taken from Beasley's OR Library \cite{BeasleyJE1990b} and classes I--N were newly generated by us in a similar manner.
The random instance generator for classes I--N are available at \cite{UmetaniS2015}.
Each class has five instances; we denote instances in class G as G.1, $\dots$, G.5, and other instances in classes H--N similarly.
Another set of benchmark instances is called RAIL arising from a crew pairing problem in an Italian railway company \cite{CapraraA1999,CeriaS1998}.
The summary of these instances are given in Table~\ref{tab:scp}, where the density is defined by $\sum_{i \in M} \sum_{j \in N} a_{ij} / mn$.

For each random instance, we generated four types of \mbox{SMCP-GUB} instances (by the instance generator available at \cite{UmetaniS2015}) with different values of parameters $d_h$ and $|G_h|$ as shown in Table~\ref{tab:smcp-gub}, where all blocks $G_h$ ($h \in K$) have the same size $|G_h|$ and upper bound $d_h$ for each instance.
Here, the right-hand sides of multicover constraints $b_i$ are random integers taken from interval $[1,5]$.
\begin{table}[tb]
\centering
\caption{The benchmark instances for SCP\label{tab:scp}}
\smallskip
{\tabcolsep=0.5em
{\small
\begin{tabular}{lrrrr}\hline
\multicolumn{1}{l}{Instance} & \multicolumn{1}{c}{\#cst.} & \multicolumn{1}{c}{\#vars.} & \multicolumn{1}{c}{Density} & \multicolumn{1}{c}{Cost range} \\ \hline
G.1--G.5 & 1000 & 10,000 & 2.0\% & [1,100] \\
H.1--H.5 & 1000 & 10,000 & 5.0\% & [1,100] \\
I.1--I.5 & 1000 & 50,000 & 1.0\% & [1,100] \\
J.1--J.5 & 1000 & 100,000 & 1.0\% & [1,100] \\
K.1--K.5 & 2000 & 100,000 & 0.5\% & [1,100] \\
L.1--L.5 & 2000 & 200,000 & 0.5\% & [1,100] \\
M.1--M.5 & 5000 & 500,000 & 0.25\% & [1,100] \\
N.1--N.5 & 5000 & 1,000,000 & 0.25\% & [1,100] \\ \hline
RAIL507 & 507 & 63,009 & 1.3\% & [1,2] \\
RAIL516 & 516 & 47,311 & 1.3\% & [1,2] \\
RAIL582 & 582 & 55,515 & 1.2\% & [1,2] \\
RAIL2536 & 2536 & 1,081,841 & 0.4\% & [1,2] \\
RAIL2586 & 2586 & 920,683 & 0.3\% & [1,2] \\
RAIL4284 & 4284 & 1,092,610 & 0.2\% & [1,2] \\
RAIL4872 & 4872 & 968,672 & 0.2\% & [1,2] \\ \hline
\end{tabular}
}
}
\end{table}
\begin{table}[tb]
\centering
\caption{Four types of benchmark instances for \mbox{SMCP-GUB} ($d_h / |G_h|$) \label{tab:smcp-gub}}
\smallskip
{\tabcolsep=0.5em
{\small
\begin{tabular}{lrrrr} \\ \hline
\multicolumn{1}{l}{Instance} & \multicolumn{1}{c}{Type1} & \multicolumn{1}{c}{Type2} & \multicolumn{1}{c}{Type3} & \multicolumn{1}{c}{Type4} \\ \hline
G.1--G.5 & 1/10 & 10/100 & 5/10 & 50/100 \\
H.1--H.5 & 1/10 & 10/100 & 5/50 & 50/100 \\
I.1--I.5 & 1/50 & 10/500 & 5/50 & 50/500 \\
J.1--J.5 & 1/50 & 10/500 & 5/50 & 50/500 \\
K.1--K.5 & 1/50 & 10/500 & 5/50 & 50/500 \\
L.1--L.5 & 1/50 & 10/500 & 5/50 & 50/500 \\
M.1--M.5 & 1/50 & 10/500 & 5/50 & 50/500 \\
N.1--N.5 & 1/100 & 10/1000 & 5/100 & 50/1000 \\ \hline
\end{tabular}
}
}
\end{table}

To the best of our knowledge, there are no specially tailored algorithms for the \mbox{SMCP-GUB}, and \mbox{SMCP-GUB} instances emerging from various applications have been formulated as MIP problems and solved by general purpose solvers in the literature \cite{BettinelliA2014,ChoiE2007,KohlN2004,CapraraA2003,IkegamiA2003,HammerPL2006}.
We accordingly compared the proposed algorithm with two recent MIP solvers called CPLEX12.6 and Gurobi5.6.2 and a local search solver called LocalSolver3.1.
LocalSolver3.1 is not the latest version, but it performs better than more recent version 4.0 for the benchmark instances.
We also compared a 3-flip neighborhood local search algorithm \cite{YagiuraM2006} for SCP instances.
These solvers and the proposed algorithm were tested on a Mac Pro desktop computer with two 2.66~GHz Intel Xeon (six cores) processors and were run on a single thread with time limits shown in Table~\ref{tab:time_limit}.

\begin{table}[tb]
\centering
\caption{Computation time of the tested solvers and the proposed algorithm for \mbox{SMCP-GUB} and SCP\label{tab:time_limit}}
\smallskip
{\tabcolsep=0.5em
{\small
\begin{tabular}{lrr}\hline
\multicolumn{1}{l}{Instance} & \multicolumn{1}{c}{MIP solvers} & \multicolumn{1}{c}{Heuristics} \\ \hline
G.1--G.5 & 3600~s & 600~s \\
H.1--H.5 & 3600~s & 600~s \\
I.1--I.5 & 3600~s & 600~s \\
J.1--J.5 & 3600~s & 600~s \\
K.1--K.5 & 7200~s & 1200~s \\
L.1--L.5 & 7200~s & 1200~s \\
M.1--M.5 & 18,000~s & 3000~s \\
N.1--N.5 & 18,000~s & 3000~s \\ \hline
RAIL507 & 3600~s & 600~s \\
RAIL516 & 3600~s & 600~s \\
RAIL582 & 3600~s & 600~s \\
RAIL2536 & 18,000~s & 3000~s \\
RAIL2586 & 18,000~s & 3000~s \\
RAIL4284 & 18,000~s & 3000~s \\
RAIL4872 & 18,000~s & 3000~s \\ \hline
\end{tabular}
}
}
\end{table}

We first compared the proposed algorithm with different evaluation schemes of variables: (i)~the Lagrangian cost $\tilde{c}_j(\bm{u})$, (ii)~the normalized Lagrangian score $\rho_j$, and (iii)~the pseudo-Lagrangian score $\phi_j$.
We also tested the proposed algorithm without the size reduction mechanism.

Table~\ref{tab:setting-smcp-gub} shows the average objective values of the proposed algorithm with different evaluation schemes of variables for each class of \mbox{SMCP-GUB} instances.
The third column ``$z_{\scalebox{0.5}{\textnormal{LP}}}$'' shows the optimal values of LP relaxation, and the fourth column ``w/o'' shows the proposed algorithm without the size reduction mechanism.
The fifth, sixth, and seventh columns show the results of the proposed algorithm with different evaluation schemes.
The best upper bounds among the compared settings are highlighted in bold.
The last row shows the average relative gaps $\frac{z(\bm{x}) - \bm{z}_{\scalebox{0.5}{\textnormal{best}}}}{z(\bm{x})} \times 100$ (\%) of the compared algorithm, where $\bm{z}_{\scalebox{0.5}{\textnormal{best}}}$ is the best upper bounds among those obtained by all algorithms in this paper.
Table~\ref{tab:core-smcp-gub} shows the average size $\frac{|C|}{|N|} \times 100$ (\%) of the core problem $C$ in the proposed algorithm for each class of \mbox{SMCP-GUB} instances.
The detailed computational results are shown in the online supplement.

We observe that the proposed algorithm with the Lagrangian cost $\tilde{c}_j(\bm{u})$ performs much worse than the algorithm without the size reduction mechanism for types~1 and 2, which indicates that the Lagrangian cost $\tilde{c}_j(\bm{u})$ does not evaluate the promising variables properly for these instances.
The proposed algorithm with the normalized Lagrangian score $\rho_j$ performs much better than the algorithm with the Lagrangian cost $\tilde{c}_j(\bm{u})$ for types~1 and 2, while they show almost the same performance for types~3 and 4.
This is because the normalized Lagrangian score $\rho_j$ takes almost the same value as the Lagrangian cost $\tilde{c}_j(\bm{u})$ for types~3 and 4.
We also observed that the proposed algorithm with the pseudo-Lagrangian score $\phi_j$ performs better than the algorithm with the normalized Lagrangian score $\rho_j$ for most of the tested instances.
These observations indicate that the proposed algorithm with the pseudo-Lagrangian score $\phi_j$ succeeds in selecting a small number of promising variables properly even for \mbox{SMCP-GUB} instances having hard GUB constraints.

\begin{table}[tb]
\centering
\caption{Computational results of the proposed algorithm with different evaluation schemes of variables for \mbox{SMCP-GUB}\label{tab:setting-smcp-gub}}
\smallskip
{\tabcolsep=0.5em
{\small
\begin{tabular}{llrrrrr} \hline
& \multicolumn{1}{l}{Instance} & \multicolumn{1}{c}{$z_{\scalebox{0.5}{\textnormal{LP}}}$} & \multicolumn{1}{c}{w/o} & \multicolumn{1}{c}{Score $\tilde{c}_j(\bm{u})$} & \multicolumn{1}{c}{Score $\rho_j$} & \multicolumn{1}{c}{Score $\phi_j$} \\ \hline
Type1 & G.1--G.5 & 1715.95 & 2357.4 & 2344.8 & 2344.8 & \textbf{2319.4} \\
 & H.1--H.5 & 395.62 & 603.6 & 602.6 & \textbf{601.2} & 604.6 \\
 & I.1--I.5 & 2845.47 & 3913.8 & 5810.2 & 3932.8 & \textbf{3800.4} \\
 & J.1--J.5 & 1466.03 & 2028.4 & 4194.8 & 2011.4 & \textbf{1923.0} \\
 & K.1--K.5 & 5668.91 & 7996.8 & 11873.8 & 8083.8 & \textbf{7691.8} \\
 & L.1--L.5 & 2959.29 & 4254.4 & 8486.8 & 4156.0 & \textbf{3999.4} \\
 & M.1--M.5 & 5476.58 & 8672.0 & 19035.2 & 8343.0 & \textbf{7762.0} \\
 & N.1--N.5 & 4780.91 & 8101.0 & 20613.8 & 7970.8 & \textbf{7010.2} \\ \hline
Type2 & G.1--G.5 & 1508.86 & 1902.4 & \textbf{1900.2} & 1900.6 & 1902.8 \\
 & H.1--H.5 & 367.76 & 508.6 & 504.6 & 502.8 & \textbf{502.6} \\
 & I.1--I.5 & 2702.67 & 3549.0 & 5075.8 & 3550.4 & \textbf{3466.0} \\
 & J.1--J.5 & 1393.72 & 1875.8 & 3471.4 & 1834.6 & \textbf{1791.4} \\
 & K.1--K.5 & 5394.32 & 7432.2 & 11496.0 & 7375.6 & \textbf{7053.6} \\
 & L.1--L.5 & 2820.97 & 3949.4 & 7419.6 & 3830.8 & \textbf{3654.4} \\
 & M.1--M.5 & 5244.54 & 7998.6 & 16781.4 & 7647.8 & \textbf{7187.4} \\
 & N.1--N.5 & 4619.25 & 7694.6 & 19552.4 & 7010.2 & \textbf{6638.6} \\ \hline
Type3 & G.1--G.5 & 708.03 & 765.0 & \textbf{762.2} & \textbf{762.2} & \textbf{762.2} \\
 & H.1--H.5 & 190.16 & 212.0 & \textbf{211.2} & \textbf{211.2} & 211.6 \\
 & I.1--I.5 & 934.96 & 1125.6 & 1115.2 & 1112.8 & \textbf{1103.8} \\
 & J.1--J.5 & 547.93 & 652.4 & 640.8 & 641.4 & \textbf{639.8} \\
 & K.1--K.5 & 1889.22 & 2310.2 & 2274.2 & 2263.2 & \textbf{2248.8} \\
 & L.1--L.5 & 1104.29 & 1339.2 & 1306.8 & 1312.0 & \textbf{1300.6} \\
 & M.1--M.5 & 2083.85 & 2708.2 & 2572.8 & 2563.8 & \textbf{2542.6} \\
 & N.1--N.5 & 1747.88 & 2441.4 & 2347.8 & 2308.6 & \textbf{2225.8} \\ \hline
Type4 & G.1--G.5 & 691.14 & 732.8 & 736.4 & \textbf{730.2} & 730.8 \\
 & H.1--H.5 & 187.71 & 204.4 & 203.6 & 203.6 & \textbf{203.2} \\
 & I.1--I.5 & 917.48 & 1074.8 & 1060.8 & 1064.2 & \textbf{1059.4} \\
 & J.1--J.5 & 539.06 & 625.0 & 615.4 & 615.4 & \textbf{611.2} \\
 & K.1--K.5 & 1853.17 & 2217.0 & 2166.8 & 2174.0 & \textbf{2161.8} \\
 & L.1--L.5 & 1088.47 & 1286.0 & 1257.4 & 1257.8 & \textbf{1250.8} \\
 & M.1--M.5 & 2051.53 & 2582.6 & 2484.0 & 2492.4 & \textbf{2453.8} \\
 & N.1--N.5 & 1724.53 & 2337.6 & 2240.6 & 2233.4 & \textbf{2167.4} \\ \hline
Avg. gap & & & 4.69\% & 19.13\% & 2.84\% & \textbf{0.56\%} \\ \hline
\end{tabular}
}
}
\end{table}

\begin{table}[tb]
\centering
\caption{The size of the core problem in the proposed algorithm with different evaluation schemes of variables for \mbox{SMCP-GUB}\label{tab:core-smcp-gub}}
\smallskip
{\tabcolsep=0.5em
{\small
\begin{tabular}{llrrr} \hline
& \multicolumn{1}{l}{Instance} & \multicolumn{1}{c}{Score $\tilde{c}_j(\bm{u})$} & \multicolumn{1}{c}{Score $\rho_j$} & \multicolumn{1}{c}{Score $\phi_j$} \\ \hline
Type1 & G.1--G.5 & 20.51\% & 20.37\% & 24.84\% \\
 & H.1--H.5 & 10.14\% & 10.12\% & 12.53\% \\
 & I.1--I.5 & 5.75\% & 5.41\% & 6.67\% \\
 & J.1--J.5 & 2.79\% & 2.68\% & 3.30\% \\
 & K.1--K.5 & 5.79\% & 5.49\% & 6.65\% \\
 & L.1--L.5 & 2.86\% & 2.73\% & 3.34\% \\
 & M.1--M.5 & 2.49\% & 2.45\% & 2.92\% \\
 & N.1--N.5 & 1.19\% & 1.19\% & 1.40\% \\ \hline
Type2 & G.1--G.5 & 18.41\% & 18.45\% & 21.90\% \\
 & H.1--H.5 & 9.32\% & 9.26\% & 11.11\% \\
 & I.1--I.5 & 5.52\% & 5.17\% & 6.30\% \\
 & J.1--J.5 & 2.74\% & 2.58\% & 3.16\% \\
 & K.1--K.5 & 5.53\% & 5.26\% & 6.31\% \\
 & L.1--L.5 & 2.76\% & 2.64\% & 3.16\% \\
 & M.1--M.5 & 2.41\% & 2.35\% & 2.75\% \\
 & N.1--N.5 & 1.15\% & 1.12\% & 1.33\% \\ \hline
Type3 & G.1--G.5 & 23.48\% & 23.48\% & 25.75\% \\
 & H.1--H.5 & 11.65\% & 11.65\% & 12.72\% \\
 & I.1--I.5 & 6.07\% & 6.05\% & 7.18\% \\
 & J.1--J.5 & 3.04\% & 3.04\% & 3.56\% \\
 & K.1--K.5 & 6.14\% & 6.12\% & 7.18\% \\
 & L.1--L.5 & 3.06\% & 3.07\% & 3.57\% \\
 & M.1--M.5 & 2.65\% & 2.65\% & 3.07\% \\
 & N.1--N.5 & 1.26\% & 1.25\% & 1.45\% \\ \hline
Type4 & G.1--G.5 & 23.23\% & 23.22\% & 25.18\% \\
 & H.1--H.5 & 11.55\% & 11.55\% & 12.29\% \\
 & I.1--I.5 & 5.95\% & 5.97\% & 7.00\% \\
 & J.1--J.5 & 3.03\% & 3.03\% & 3.54\% \\
 & K.1--K.5 & 6.04\% & 6.03\% & 7.07\% \\
 & L.1--L.5 & 3.02\% & 3.03\% & 3.51\% \\
 & M.1--M.5 & 2.62\% & 2.63\% & 3.01\% \\
 & N.1--N.5 & 1.24\% & 1.23\% & 1.40\% \\ \hline
\end{tabular}
}
}
\end{table}

Table~\ref{tab:setting-scp} shows the average objective values of the proposed algorithm for each class of SCP instances, in which we omit the results for the normalized Lagrangian score $\rho_j$ because it takes exactly the same value as the Lagrangian cost $\tilde{c}_j(\bm{u})$ for SCP instances.
Table~\ref{tab:core-scp} shows the average size $\frac{|C|}{|N|} \times 100$ (\%) of the core problem $C$ in the proposed algorithm for each class of SCP instances.
The detailed computational results are shown in the online supplement.

We observe that the proposed algorithm with the Lagrangian cost $\tilde{c}_j(\bm{u})$ and the pseudo-Lagrangian score $\phi_j$ performs better than the algorithm without the size reduction mechanism, and the algorithm with the Lagrangian cost $\tilde{c}_j(\bm{u})$ performs best for the RAIL instances.
These observations indicate that the proposed algorithm with the pseudo-Lagrangian score $\phi_j$ succeeds in selecting a small number of promising variables properly for SCP instances.

\begin{table}[tb]
\centering
\caption{Computational results of the proposed algorithm with different evaluation schemes of variables for SCP\label{tab:setting-scp}}
\smallskip
{\tabcolsep=0.5em
{\small
\begin{tabular}{lrrrr} \hline
\multicolumn{1}{l}{Instance} & \multicolumn{1}{c}{$z_{\scalebox{0.5}{\textnormal{LP}}}$} & \multicolumn{1}{c}{w/o} & \multicolumn{1}{c}{Score $\tilde{c}_j(\bm{u})$} & \multicolumn{1}{c}{Score $\phi_j$}\\ \hline
G.1--G.5 & 149.48 & \textbf{166.4} & \textbf{166.4} & \textbf{166.4} \\
H.1--H.5 & 45.67 & \textbf{59.6} & \textbf{59.6} & \textbf{59.6} \\
I.1--I.5 & 138.96 & 160.2 & 159.0 & \textbf{158.8} \\
J.1--J.5 & 104.78 & 132.6 & 130.8 & \textbf{130.6} \\
K.1--K.5 & 276.66 & 321.8 & 319.2 & \textbf{318.2} \\
L.1--L.5 & 209.33 & 268.2 & \textbf{263.0} & 264.0 \\
M.1--M.5 & 415.77 & 575.6 & \textbf{565.2} & 565.8 \\
N.1--N.5 & 348.92 & 534.2 & \textbf{516.4} & 518.2 \\ \hline
RAIL507 & 172.14 & 179 & \textbf{175} & 178 \\
RAIL516 & 182.00 & 183 & \textbf{182} & 183 \\
RAIL582 & 209.71 & 217 & \textbf{212} & 216 \\
RAIL2536 & 688.39 & 717 & \textbf{693} & 715 \\
RAIL2586 & 935.92 & 1010 & \textbf{965} & 988 \\
RAIL4284 & 1054.05 & 1121 & \textbf{1080} & 1127 \\
RAIL4872 & 1509.63 & 1618 & \textbf{1564} & 1591 \\ \hline
Avg. gap & & 2.74\% & \textbf{1.28\%} & 1.63\% \\ \hline
\end{tabular}
}
}
\end{table}

\begin{table}[tb]
\centering
\caption{The size of the core problem in the proposed algorithm with different evaluation schemes of variables for SCP\label{tab:core-scp}}
\smallskip
{\tabcolsep=0.5em
{\small
\begin{tabular}{lrr} \hline
\multicolumn{1}{l}{Instance} & \multicolumn{1}{c}{Score $\tilde{c}_j(\bm{u})$} & \multicolumn{1}{c}{Score $\phi_j$}\\ \hline
G.1--G.5 & 10.36\% & 10.36\% \\
H.1--H.5 & 5.17\% & 5.25\% \\
I.1--I.5 & 2.93\% & 2.93\% \\
J.1--J.5 & 1.31\% & 1.32\% \\
K.1--K.5 & 2.96\% & 2.94\% \\
L.1--L.5 & 1.31\% & 1.33\% \\
M.1--M.5 & 1.12\% & 1.14\% \\
N.1--N.5 & 0.52\% & 0.53\% \\ \hline
RAIL507 & 1.98\% & 2.18\% \\
RAIL516 & 3.46\% & 3.84\% \\
RAIL582 & 3.03\% & 3.25\% \\
RAIL2536 & 0.43\% & 0.50\% \\
RAIL2586 & 0.71\% & 0.82\% \\
RAIL4284 & 0.71\% & 0.81\% \\
RAIL4872 & 1.16\% & 1.34\% \\ \hline
\end{tabular}
}
}
\end{table}

We next compared variations of the proposed algorithm obtained by applying one of the following three modifications: (i)~replace the 2-flip neighborhood search with the 1-flip neighborhood local search algorithm (i.e., apply only Steps~1 and 2 in Algorithm 2-FNLS), (ii)~exclude the path relinking method (i.e., exclude Algorithm PRL from the proposed algorithm), and (iii)~replace the randomized greedy algorithm with the uniformly random selection of $j \in I_1^{\uparrow}(\bm{x})$, where we tested all of these three variations with the pseudo-Lagrangian score $\phi_j$.
Tables~\ref{tab:variation-smcp-gub} and \ref{tab:variation-scp} show the average objective values of the three variations of the proposed algorithm for \mbox{SMCP-GUB} and SCP instances.
The columns ``1-FNLS'', ``No-PRL'' and ``No-GR'' show the results of the proposed algorithm with the above modifications (i), (ii) and (iii), respectively.
The last column ``Proposed'' shows the results of the proposed algorithm (without such modifications).
The detailed computational results are shown in the online supplement.

Comparing columns ``1-FNLS'' with ``Proposed'', we observed that the 2-flip neighborhood local search algorithm performs much better than the 1-flip neighborhood local search algorithm for \mbox{SMCP-GUB} instances, while they show almost the same performance for SCP instances.
The proposed algorithm performs better on average than that obtained by excluding the path relinking and that obtained by replacing the randomized greedy algorithm with the uniformly random selection of $j \in I_1^{\uparrow}(\bm{x})$ for \mbox{SMCP-GUB} and SCP instances, where the performance depends on types of instances and the differences are small.
These observations indicate that procedures for generating initial solutions have less influence on its performance than those in the evaluation schemes of variables and the neighborhood search procedures.

\begin{table}[tb]
\centering
\caption{Computational results of variations of the proposed algorithm for \mbox{SMCP-GUB}\label{tab:variation-smcp-gub}}
\smallskip
{\tabcolsep=0.5em
{\small
\begin{tabular}{llrrrrr} \hline
& \multicolumn{1}{l}{Instance} & \multicolumn{1}{c}{$z_{\scalebox{0.5}{\textnormal{LP}}}$} & \multicolumn{1}{c}{\mbox{1-FNLS}} & \multicolumn{1}{c}{\mbox{No-PRL}} & \multicolumn{1}{c}{\mbox{No-GR}} & \multicolumn{1}{c}{Proposed} \\ \hline
Type1 & G.1--G.5 & 1715.95 & 2354.2 & 2330.2 & 2342.6 & \textbf{2319.4} \\
 & H.1--H.5 & 395.62 & 608.8 & \textbf{601.6} & 604.0 & 604.6 \\
 & I.1--I.5 & 2845.47 & 3850.2 & 3784.2  & \textbf{3778.8} & 3800.4 \\
 & J.1--J.5 & 1466.03 & 1973.8 & 1960.0 & 1937.0 & \textbf{1923.0} \\
 & K.1--K.5 & 5668.91 & 7744.2 & 7692.0 & 7713.0 & \textbf{7691.8} \\
 & L.1--L.5 & 2959.29 & 4063.6 & \textbf{3982.0} & 3988.0 & 3999.4 \\
 & M.1--M.5 & 5476.58 & 7949.6 & \textbf{7685.6} & 7756.4 & 7762.0 \\
 & N.1--N.5 & 4780.91 & 7312.4 & \textbf{6914.2} & 7065.2 & 7010.2 \\ \hline
Type2 & G.1--G.5 & 1508.86 & 1894.2 & 1913.2 & \textbf{1885.8} & 1902.8 \\
 & H.1--H.5 & 367.76 & 506.4 & 503.0 & \textbf{499.8} & 502.6 \\
 & I.1--I.5 & 2702.67 & 3504.8 & 3479.8 & \textbf{3452.6} & 3466.0 \\
 & J.1--J.5 & 1393.72 & 1831.2 & 1793.6 & \textbf{1782.8} & 1791.4 \\
 & K.1--K.5 & 5394.32 & 7262.2 & 7062.6 & \textbf{7044.0} & 7053.6 \\
 & L.1--L.5 & 2820.97 & 3798.0 & 3654.8 & 3659.2 & \textbf{3654.4} \\
 & M.1--M.5 & 5244.54 & 7400.6 & \textbf{7139.4} & 7219.4 & 7187.4 \\
 & N.1--N.5 & 4619.25 & 6893.2 & \textbf{6538.4} & 6635.4 & 6638.6 \\ \hline
Type3 & G.1--G.5 & 708.03 & \textbf{761.0} & 764.4 & 763.6 & 762.2 \\
 & H.1--H.5 & 190.16 & 211.8 & \textbf{211.4} & \textbf{211.4} & 211.6 \\
 & I.1--I.5 & 934.96 & 1107.4 & 1112.0 & 1104.0 & \textbf{1103.8} \\
 & J.1--J.5 & 547.93 & 646.4 & 641.2 & 640.6 & \textbf{639.8} \\
 & K.1--K.5 & 1889.22 & 2280.8 & 2262.4 & 2255.0 & \textbf{2248.8} \\
 & L.1--L.5 & 1104.29 & 1314.0 & 1306.6 & 1304.8 & \textbf{1300.6} \\
 & M.1--M.5 & 2083.85 & 2572.2 & \textbf{2541.8} & 2546.0 & 2542.6 \\
 & N.1--N.5 & 1747.88 & 2272.8 & 2229.6 & 2241.8 & \textbf{2225.8} \\ \hline
Type4 & G.1--G.5 & 691.14 & \textbf{730.8} & 732.4 & \textbf{730.8} & \textbf{730.8} \\
 & H.1--H.5 & 187.71 & 204.6 & 203.8 & 204.2 & \textbf{203.2} \\
 & I.1--I.5 & 917.48 & 1064.6 & 1059.4 & \textbf{1057.2} & 1059.4 \\
 & J.1--J.5 & 539.06 & 615.8 & 616.6 & 612.2 & \textbf{611.2} \\
 & K.1--K.5 & 1853.17 & 2188.0 & 2162.0 & \textbf{2160.4} & 2161.8 \\
 & L.1--L.5 & 1088.47 & 1271.4 & 1257.2 & 1253.8 & \textbf{1250.8} \\
 & M.1--M.5 & 2051.53 & 2485.0 & \textbf{2447.6} & 2457.0 & 2453.8 \\
 & N.1--N.5 & 1724.53 & 2202.4 & \textbf{2147.2} & 2178.0 & 2167.4 \\ \hline
Avg. gap & & & 1.98\% & 0.58\% & 0.64\% & \textbf{0.56\%} \\ \hline
\end{tabular}
}
}
\end{table}

\begin{table}[tb]
\centering
\caption{Computational results of variations of the proposed algorithm for SCP\label{tab:variation-scp}}
\smallskip
{\tabcolsep=0.5em
{\small
\begin{tabular}{lrrrrr} \hline
\multicolumn{1}{l}{Instance} & \multicolumn{1}{c}{$z_{\scalebox{0.5}{\textnormal{LP}}}$} & \multicolumn{1}{c}{\mbox{1-FNLS}} & \multicolumn{1}{c}{\mbox{No-PRL}} & \multicolumn{1}{c}{\mbox{No-GR}} & \multicolumn{1}{c}{Proposed} \\ \hline
G.1--G.5 & 149.48 & \textbf{166.4} & 166.6 & 166.6 & \textbf{166.4} \\
H.1--H.5 & 45.67 & \textbf{59.6} & \textbf{59.6} & \textbf{59.6} & \textbf{59.6} \\
I.1--I.5 & 138.96 & 159.0 & 159.2 & \textbf{158.8} & \textbf{158.8} \\
J.1--J.5 & 104.78 & 130.4 & 130.8 & \textbf{130.0} & 130.6 \\
K.1--K.5 & 276.66 & 318.6 & 320.0 & \textbf{318.2} & \textbf{318.2} \\
L.1--L.5 & 209.33 & \textbf{263.4} & 264.4 & 264.2 & 264.0 \\
M.1--M.5 & 415.77 & \textbf{564.4} & 567.0 & 565.8 & 565.8 \\
N.1--N.5 & 348.92 & 520.6 & \textbf{517.0} & 520.4 & 518.2 \\ \hline
RAIL507 & 172.14 & 179 & \textbf{176} & 181 & 178 \\
RAIL516 & 182.00 & \textbf{183} & \textbf{183} & \textbf{183} & \textbf{183} \\
RAIL582 & 209.71 & 218 & \textbf{212} & 214 & 216 \\
RAIL2536 & 688.39 & 728 & \textbf{711} & 712 & 715 \\
RAIL2586 & 935.92 & 1000 & \textbf{979} & 991 & 988 \\
RAIL4284 & 1054.05 & 1135 & \textbf{1109} & 1135 & 1127 \\
RAIL4872 & 1509.63 & 1611 & \textbf{1579} & 1599 & 1591 \\ \hline
Avg. gap & & 1.77\% & \textbf{1.62\%} & 1.68\% & 1.63\% \\ \hline
\end{tabular}
}
}
\end{table}

We finally compared the proposed algorithm with the above-mentioned recent solvers, where we tested the proposed algorithm with the pseudo-Lagrangian score $\phi_j$.
Tables~\ref{tab:solver-smcp-gub} and \ref{tab:solver-scp} show the average objective values of the compared algorithms for each class of \mbox{SMCP-GUB} and SCP instances, respectively, where the results with asterisks ``$\ast$'' indicate that the obtained feasible solutions were proven to be optimal.
The detailed computational results are shown in the online supplement.

We observe that the proposed algorithm performs better than CPLEX12.6, Gurobi5.6.2 and LocalSolver3.1 for all types of \mbox{SMCP-GUB} instances and classes G--N of SCP instances while it performs worse than CPLEX12.6 and Gurobi5.6.2 for class RAIL of SCP instances.
We also observe that the 3-flip neighborhood local search algorithm \cite{YagiuraM2006} achieves best upper bounds for almost all classes of SCP instances.
These observations indicate that the variable fixing and pricing techniques based on the LP and/or Lagrangian relaxation are greatly affected by the gaps between lower and upper bounds, and they may not work effectively for the instances having large gaps.
For such instances, the proposed algorithm with the pseudo-Lagrangian score $\phi_j$ succeeds in evaluating the promising variables properly.

\begin{table}[tb]
\centering
\caption{Computational results of the tested solvers and the proposed algorithm for \mbox{SMCP-GUB}\label{tab:solver-smcp-gub}}
\smallskip
{\tabcolsep=0.5em
{\small
\begin{tabular}{llrrrrrr} \hline
& \multicolumn{1}{l}{Instance} & \multicolumn{1}{c}{$z_{\scalebox{0.5}{\textnormal{LP}}}$} & \multicolumn{1}{c}{CPLEX} & \multicolumn{1}{c}{Gurobi} & \multicolumn{1}{c}{LocalSolver} & \multicolumn{1}{c}{Proposed} \\ \hline
Type1 & G.1--G.5 & 1715.95 & 2590.8 & 2629.6 & 4096.2 & \textbf{2319.4} \\
 & H.1--H.5 & 395.62 & 680.8 & 705.4 & 1217.2 & \textbf{604.6} \\
 & I.1--I.5 & 2845.47 & 4757.2 & 4544.6 & 7850.8 & \textbf{3800.4} \\
 & J.1--J.5 & 1466.03 & 2656.4 & 2329.6 & 4744.6 & \textbf{1923.0} \\
 & K.1--K.5 & 5668.91 & 12363.6 & 9284.4 & 16255.0 & \textbf{7691.8} \\
 & L.1--L.5 & 2959.29 & 7607.4 & 4749.2 & 9940.6 & \textbf{3999.4} \\
 & M.1--M.5 & 5476.58 & 16211.0 & 10152.2 & 21564.8 & \textbf{7762.0} \\
 & N.1--N.5 & 4780.91 & 16970.8 & 8966.2 & 21618.8 & \textbf{7010.2} \\ \hline
Type2 & G.1--G.5 & 1508.86 & 2053.8 & 2023.8 & 3327.0 & \textbf{1902.8} \\
 & H.1--H.5 & 367.76 & 555.2 & 560.0 & 915.4 & \textbf{502.6} \\
 & I.1--I.5 & 2702.67 & 4382.0 & 3939.6 & 6514.4 & \textbf{3466.0} \\
 & J.1--J.5 & 1393.72 & 3156.2 & 2132.6 & 3837.6 & \textbf{1791.4} \\
 & K.1--K.5 & 5394.32 & 15763.6 & 7952.2 & 13647.0 & \textbf{7053.6} \\
 & L.1--L.5 & 2820.97 & 15683.0 & 4338.4 & 8117.4 & \textbf{3654.4} \\
 & M.1--M.5 & 5244.54 & 36794.8 & 8600.8 & 17614.4 & \textbf{7187.4} \\
 & N.1--N.5 & 4619.25 & 36970.0 & 8154.4 & 17732.6 & \textbf{6638.6} \\ \hline
Type3 & G.1--G.5 & 708.03 & 771.0 & 769.8 & 1585.4 & \textbf{762.2} \\
 & H.1--H.5 & 190.16 & 214.6 & 216.4 & 554.8 & \textbf{211.6} \\
 & I.1--I.5 & 934.96 & 1223.6 & 1176.2 & 2326.4 & \textbf{1103.8} \\
 & J.1--J.5 & 547.93 & 707.4 & 669.2 & 1802.8 & \textbf{639.8} \\
 & K.1--K.5 & 1889.22 & 3802.8 & 2530.6 & 4788.8 & \textbf{2248.8} \\
 & L.1--L.5 & 1104.29 & 3318.2 & 1384.6 & 4294.4 & \textbf{1300.6} \\
 & M.1--M.5 & 2083.85 & 8404.4 & 2691.6 & 9575.8 & \textbf{2542.6} \\
 & N.1--N.5 & 1747.88 & 8394.0 & 2597.6 & 11805.4 & \textbf{2225.8} \\ \hline
Type4 & G.1--G.5 & 691.14 & 735.2 & 731.8 & 1495.0 & \textbf{730.8} \\
 & H.1--H.5 & 187.71 & 206.8 & 207.2 & 558.8 & \textbf{203.2} \\
 & I.1--I.5 & 917.48 & 1150.0 & 1089.6 & 2109.0 & \textbf{1059.4} \\
 & J.1--J.5 & 539.06 & 644.4 & 629.8 & 1698.4 & \textbf{611.2} \\
 & K.1--K.5 & 1853.17 & 2734.2 & 2231.6 & 4654.4 & \textbf{2161.8} \\
 & L.1--L.5 & 1088.47 & 1566.0 & 1299.6 & 3809.6 & \textbf{1250.8} \\
 & M.1--M.5 & 2051.53 & 7572.6 & 2547.2 & 9909.4 & \textbf{2453.8} \\
 & N.1--N.5 & 1724.53 & 8371.8 & 2339.8 & 9906.8 & \textbf{2167.4} \\ \hline
Avg. gap & & & 34.12\% & 10.44\% & 58.37\% & \textbf{0.56\%} \\ \hline
\end{tabular}
}
}
\end{table}
\begin{table}[tb]
\centering
\caption{Computational results of the tested solvers and the proposed algorithm for SCP\label{tab:solver-scp}}
\smallskip
{\tabcolsep=0.5em
{\small
\begin{tabular}{lrrrrrrr} \hline
\multicolumn{1}{l}{Instance} & \multicolumn{1}{c}{$z_{\scalebox{0.5}{\textnormal{LP}}}$} & \multicolumn{1}{c}{CPLEX} & \multicolumn{1}{c}{Gurobi} & \multicolumn{1}{c}{LocalSolver} & \multicolumn{1}{c}{Yagiura et~al.} & \multicolumn{1}{c}{Proposed} \\ \hline
G.1--G.5 & 149.48 & 166.6 & 166.6 & 316.6 & \textbf{166.4} & \textbf{166.4}\\
H.1--H.5 & 45.67 & 60.2 & 60.2 & 169.2 & \textbf{59.6} & \textbf{59.6} \\
I.1--I.5 & 138.96 & 162.4 & 162.4 & 403.4 & \textbf{158.0} & 158.8 \\
J.1--J.5 & 104.78 & 137.8 & 135.2 & 335.8 & \textbf{129.0} & 130.6 \\
K.1--K.5 & 276.66 & 325.8 & 325.2 & 828.2 & \textbf{313.2} & 318.2 \\
L.1--L.5 & 209.33 & 285.4 & 273.6 & 918.6 & \textbf{258.6} & 264.0 \\
M.1--M.5 & 415.77 & 637.2 & 609.8 & 2163.8 & \textbf{550.2} & 565.8 \\
N.1--N.5 & 348.92 & 753.0 & 577.4 & 2382.2 & \textbf{503.8} & 518.2 \\ \hline
RAIL507 & 172.14 & $\ast$\textbf{174} & $\ast$\textbf{174} & 187 & \textbf{174} & 178 \\
RAIL516 & 182.00 & $\ast$\textbf{182} & $\ast$\textbf{182} & 189 & \textbf{182} & 183 \\
RAIL582 & 209.71 & $\ast$\textbf{211} & $\ast$\textbf{211} & 227 & \textbf{211} & 216 \\
RAIL2536 & 688.39 & $\ast$\textbf{689} & $\ast$\textbf{689} & 729 & 691 & 715 \\
RAIL2586 & 935.92 & 957 & 970 & 1012 & \textbf{947} & 988 \\
RAIL4284 & 1054.05 & 1077 & 1085 & 1148 & \textbf{1064} & 1127 \\
RAIL4872 & 1509.63 & 1549 & 1562 & 1651 & \textbf{1531} & 1591 \\ \hline
Avg. gap & & 7.53\% & 4.39\% & 55.69\% & \textbf{0.01\%} & 1.63\% \\ \hline
\end{tabular}
}
}
\end{table}

\section{Conclusion\label{sec:conclusion}}
In this paper, we considered an extension of the set covering problem (SCP) called the set multicover problem with the generalized upper bound constraints (\mbox{SMCP-GUB}).
We developed a 2-flip local search algorithm incorporated with heuristic algorithms to reduce the size of instances, evaluating promising variables by taking account of GUB constraints, and with an adaptive control mechanism of penalty weights that is used to guide the search to visit feasible and infeasible regions alternately.
We also developed an efficient implementation of a 2-flip neighborhood search that reduces the number of candidates in the neighborhood without sacrificing the solution quality.
To guide the search to visit a wide variety of good solutions, we also introduced an evolutionary approach called the path relinking method that generates new solutions by combining two or more solutions obtained so far.
According to computational comparison on benchmark instances, we can conclude that the proposed method succeeds in selecting a small number of the promising variables properly and performs quite effectively even for large-scale instances having hard GUB constraints.

We expect that the evaluation scheme of promising variables is also applicable to other combinatorial optimization problems, because the pseudo-Lagrangian score $\phi_j = \tilde{c}_j(\bm{w})$ ($j \in N$) can be defined when an adaptive control mechanism of penalty weights $\bm{w}$ is incorporated in local search, and such an approach has been observed to be effective for many problems.

\section*{Acknowledgment}
This work was supported by the Grants-in-Aid for Scientific Research (JP26282085, JP15H02969, JP15K12460).


\clearpage
\appendix
\section{Proof of Lemma~\ref{lem:nb1}\label{app:nb1}}
We show that $\Delta \hat{z}_{j_1,j_2}(\bm{x},\bm{w}) \ge 0$ holds if $x_{j_1} = x_{j_2}$.
First, we consider the case with $x_{j_1} = x_{j_2} = 1$.
By the assumption of the lemma, 
\begin{equation}
\Delta \hat{z}_j^{\downarrow}(\bm{x},\bm{w}) = - c_j + \displaystyle\sum_{i \in (M_L(\bm{x}) \cup M_E(\bm{x})) \cap S_j} w_i \ge 0
\end{equation}
holds for both $j = j_1$ and $j_2$.
Then we have 
\begin{equation}
\Delta \hat{z}_{j_1,j_2}(\bm{x},\bm{w}) = \Delta \hat{z}_{j_1}^{\downarrow}(\bm{x},\bm{w}) + \Delta \hat{z}_{j_2}^{\downarrow}(\bm{x},\bm{w}) + \displaystyle\sum_{i \in M_+(\bm{x}) \cap S_{j_1} \cap S_{j_2}} w_i \ge 0,
\end{equation}
where $M_+(\bm{x}) = \{ i \in M \mid \sum_{j \in N} a_{ij} x_j = b_i + 1 \}$.
Next, we consider the case with $x_{j_1} = x_{j_2} = 0$.
By the assumption of the lemma, for both $j= j_1$ and $j_2$,
\begin{equation}
\Delta \hat{z}_j^{\uparrow}(\bm{x},\bm{w}) = c_j - \displaystyle\sum_{i \in M_L(\bm{x}) \cap S_j} w_i \ge 0
\end{equation}
holds unless $j$ belongs a block $G_h$ satisfying $\sum_{l \in G_h} x_l = d_h$.
Then we have
\begin{equation}
\Delta \hat{z}_{j_1,j_2}(\bm{x},\bm{w}) = \Delta \hat{z}_{j_1}^{\uparrow}(\bm{x},\bm{w}) + \Delta \hat{z}_{j_2}^{\uparrow}(\bm{x},\bm{w}) + \sum_{i \in M_-(\bm{x}) \cap S_{j_1} \cap S_{j_2}} w_i \ge 0,
\end{equation}
where $M_-(\bm{x}) = \{ i \in M \mid \sum_{j \in N} a_{ij} x_j = b_i - 1 \}$.
\qed

\section{Proof of Lemma~\ref{lem:nb2}\label{app:nb2}}
We assume that neither the conditions (i) nor (ii) is satisfied and $\Delta \hat{z}_{j_1,j_2}(\bm{x},\bm{w}) < 0$ holds.
Then, for (ii) is assumed to be unsatisfied, $M_E(\bm{x}) \cap S_{j_2} \cap S_{j_2} = \emptyset$ holds, and hence we have $\Delta \hat{z}_{j_1, j_2}(\bm{x},\bm{w}) = \Delta \hat{z}_{j_1}^{\downarrow}(\bm{x},\bm{w}) + \Delta \hat{z}_{j_2}^{\uparrow}(\bm{x},\bm{w}) < 0$, which implies $\Delta \hat{z}_{j_1}^{\downarrow}(\bm{x},\bm{w}) < 0$ or $\Delta \hat{z}_{j_2}^{\uparrow}(\bm{x},\bm{w}) < 0$.
If $\Delta \hat{z}_{j_1}^{\downarrow}(\bm{x},\bm{w}) < 0$ holds, then the algorithm obtains an improved solution by flipping $x_{j_1} = 1 \to 0$.
If $\Delta \hat{z}_{j_2}^{\uparrow}(\bm{x},\bm{w}) < 0$ holds, the algorithm obtains an improved solution by flipping $x_{j_2} = 0 \to 1$, because $j_1$ and $j_2$ belong to the same block satisfying $\sum_{j \in G_h} x_j < d_h$ or to different blocks, and in the latter case, $j_2$ belongs a block $G_h$ satisfying $\sum_{j \in G_h} x_j < d_h$ by the assumption that we only consider solutions that satisfy the GUB constraints, including the solution obtained after flipping $j_1$ and $j_2$.
Both cases contradict the assumption that $\bm{x}$ is locally optimal with respect to $\mathrm{NB}_1(\bm{x})$.
\qed

\section{Proof of Lemma~\ref{lem:nb3}\label{app:nb3}}
We have $\Delta \hat{z}_{j_1,j_2}(\bm{x},\bm{w}) = \Delta \hat{z}_{j_1}^{\downarrow}(\bm{x},\bm{w}) + \Delta \hat{z}_{j_2}^{\uparrow}(\bm{x},\bm{w})$, because $M_E(\bm{x}) \cap S_{j_2} \cap S_{j_2} = \emptyset$ holds.
Then we have $\min_{j \in G_h} \Delta \hat{z}_j^{\downarrow}(\bm{x},\bm{w}) + \min_{j \in G_h} \Delta \hat{z}_j^{\uparrow}(\bm{x},\bm{w}) \le \Delta \hat{z}_{j_1}^{\downarrow}(\bm{x},\bm{w}) + \Delta \hat{z}_{j_2}^{\uparrow}(\bm{x},\bm{w}) = \Delta \hat{z}_{j_1,j_2}(\bm{x},\bm{w}) < 0$, because $\min_{j \in G_h} \Delta \hat{z}_j^{\downarrow}(\bm{x},\bm{w}) \le \Delta \hat{z}_{j_1}^{\downarrow}(\bm{x},\bm{w})$ and $\min_{j \in G_h} \Delta \hat{z}_j^{\uparrow}(\bm{x},\bm{w}) \le \Delta \hat{z}_{j_2}^{\uparrow}(\bm{x},\bm{w})$ hold.
Let $j_1^{\ast} = \mathrm{arg} \min_{j \in G_h} \Delta \hat{z}_j^{\downarrow}(\bm{x},\bm{w})$ and $j_2^{\ast} = \mathrm{arg} \min_{j \in G_h} \Delta \hat{z}_j^{\uparrow}(\bm{x},\bm{w})$.
Then we have $\Delta \hat{z}_{j_1^{\ast},j_2^{\ast}}(\bm{x},\bm{w}) = \Delta \hat{z}_{j_1^{\ast}}^{\downarrow}(\bm{x},\bm{w}) + \Delta \hat{z}_{j_2^{\ast}}^{\uparrow}(\bm{x},\bm{w}) - \sum_{i \in M_E(\bm{x}) \cap S_{j_1^{\ast}} \cap S_{j_2^{\ast}}} w_i < \Delta \hat{z}_{j_1}^{\downarrow}(\bm{x},\bm{w}) + \Delta \hat{z}_{j_2}^{\uparrow}(\bm{x},\bm{w}) < 0$.
\qed

\section{Efficient incremental evaluation of solutions in 2-FNLS\label{app:eval}}
We first consider the case that the current solution $\bm{x}$ moves to $\bm{x}^{\prime}$ by flipping $x_j = 0 \to 1$.
Then, the algorithm first updates $s_i(\bm{x})$ ($i \in S_j$) in $\mathrm{O}(|S_j|)$ time by $s_i(\bm{x}^{\prime}) \leftarrow s_i(\bm{x})+1$, and then updates $\Delta p_l^{\uparrow}(\bm{x},\bm{w})$ and $\Delta p_l^{\downarrow}(\bm{x},\bm{w})$ ($l \in N_i$, $i \in S_j$) in $\mathrm{O}(\sum_{i \in S_j} |N_i|)$ time by $\Delta p_l^{\uparrow}(\bm{x}^{\prime},\bm{w}) \leftarrow \Delta p_l^{\uparrow}(\bm{x},\bm{w}) - w_i$ if $s_i(\bm{x}^{\prime}) = b_i$ holds and $\Delta p_l^{\downarrow}(\bm{x}^{\prime},\bm{w}) \leftarrow \Delta p_l^{\downarrow}(\bm{x},\bm{w}) - w_i$ if $s_i(\bm{x}^{\prime}) = b_i+1$ holds.
Similarly, we consider the other case that the current solution $\bm{x}$ moves to $\bm{x}^{\prime}$ by flipping $x_j = 1 \to 0$.
Then, the algorithm first updates $s_i(\bm{x})$ ($i \in S_j$) in $\mathrm{O}(|S_j|)$ time by $s_i(\bm{x}^{\prime}) \leftarrow s_i(\bm{x})-1$, and then updates $\Delta p_l^{\uparrow}(\bm{x},\bm{w})$ and $\Delta p_l^{\downarrow}(\bm{x},\bm{w})$ ($l \in N_i$, $i \in S_j$) in $\mathrm{O}(\sum_{i \in S_j} |N_i|)$ time by $\Delta p_l^{\uparrow}(\bm{x}^{\prime},\bm{w}) \leftarrow \Delta p_l^{\uparrow}(\bm{x},\bm{w}) + w_i$ if $s_i(\bm{x}^{\prime}) = b_i-1$ holds and $\Delta p_l^{\downarrow}(\bm{x}^{\prime},\bm{w}) \leftarrow \Delta p_l^{\downarrow}(\bm{x},\bm{w}) + w_i$ if $s_i(\bm{x}^{\prime}) = b_i$ holds.

\end{document}